\newcommand{\ii}{\mathrm{i}}
\newcommand{\ee}{\mathrm{e}}
\newcommand{\dd}{\mathrm{d}}
\newtheorem{rhp}{Riemann-Hilbert Problem}
\title{Inverse scattering transform for the integrable fractional derivative nonlinear Schr\"odinger equation\thanks{ {\bf Funding:}
		Liming Ling is supported by the National Natural Science Foundation of China (Grant No. 12122105); Xiaoen Zhang is supported by the National Natural Science Foundation of China (Grant No.12101246).}}
\author{Ling An\thanks{School of Mathematics, South China University of Technology, Guangzhou, China 510641
		(\email{maal@mail.scut.edu.cn}).}
	\and Liming Ling\thanks{School of Mathematics, South China University of Technology, Guangzhou, China 510641
		(\email{linglm@scut.edu.cn}).}
	\and Xiaoen Zhang\thanks{School of Mathematics, South China University of Technology, Guangzhou, China 510641
		(\email{zhangxiaoen@scut.edu.cn}).}}
\begin{document}

\maketitle
\begin{abstract}
In this paper, we explore the integrable fractional derivative nonlinear Schr\"odinger (fDNLS) equation by using the inverse scattering transform. Firstly, we start from the recursion operator and obtain a formal fDNLS equation. Then the inverse scattering problem is formulated and solved through the matrix Riemann-Hilbert problem. Subsequently, we give the explicit form of the fDNLS equation according to the properties of squared eigenfunctions, such as squared eigenfunctions are the eigenfunctions of the recursion operator of the integrable equations. The reflectionless potential with a simple pole for the zero boundary condition is carried out explicitly by means of determinants. Finally, for the fractional one-soliton solution, we analyze the wave propagation direction and the effect of the small fractional parameter $\epsilon$ on the wave. The fractional one-soliton solution has been verified rigorously. In addition, we also analyze the fractional rational solution obtained by taking the limit of the fractional one-soliton solution.
\end{abstract}

 \begin{keywords}
Fractional derivative nonlinear Schr\"odinger equation, recursion operator, inverse scattering transform, fractional $N$-soliton solution, fractional rational solution.
\end{keywords}

\begin{MSCcodes}
35Q15, 35Q51, 35Q55, 37K10, 37K15, 37K40.
\end{MSCcodes}

\section{Introduction}
Fractional calculus has a very long history \cite{oldham1974fractional,miller1993introduction,podlubny1999fractional}, originating from some conjectures of Leibniz and Euler. Fractional differential equations (FDEs) have been widely used to describe various physical effects, such as abnormal dispersion \cite{gazizov2009symmetry,yang2017new}, long-time behavior, subthreshold neural propagation \cite{magin2004fractional}, and so on \cite{kilbas2006theory,tarasov2011fractional}. Moreover, FDEs have been divided into many types according to the different definitions of fractional derivatives. Taking the well-known nonlinear Schr\"{o}dinger (NLS) equation as an example, there are several fractional forms \cite{al2018high,li2020vortex,li2021symmetry}. While it should be noted that these fractional equations are not integrable in the sense of inverse scattering transform (IST), which makes the obtained FDEs not have as good properties as the integrable equations.

In $2022$, Ablowitz, Been, and Carr proposed a new type of fractional equation, the fractional NLS equation and the fractional Korteweg-deVries (KdV) equation which are integrable in the sense of IST  \cite{ablowitz2022fractional}. The authors defined the fractional operator based on the Riesz fractional derivative \cite{Agrawal_2007}, which also can be called Riesz transform \cite{Riesz1949} or fractional Laplacian \cite{lischke2020fractional}, and a spectral representation for the fractional operator is then obtained by using the completeness of squared eigenfunctions. Then they claimed that this type of fractional equation could be applied to the whole Ablowitz-Kaup-Newell-Segur (AKNS) system \cite{ablowitz2022integrable}. Subsequently, the fractional forms of the higher-order modified KdV equation \cite{zhang2022interaction}, the higher-order NLS equation \cite{weng2022dynamics}, and so on were studied. In \cite{yan2022multi}, the author proposed a new integrable multi-L{\'e}vy-index and mixed fractional nonlinear equations. In \cite{zhong2022data}, the authors studied the integrable fractional equations via deep learning with Fourier neural operator. In \cite{an2022nondegenerate}, the authors took the fractional coupled Hirota equation as an example to explore the fractional integrable equation with $3\times 3$ Lax equation. In \cite{ablowitz2022discrete}, the authors extended the fractional integrable partial differential equations to the discrete NLS equation. However, the above integrable fractional equations are all related to the (discrete) AKNS system. Another significant integrable model named the derivative nonlinear Schr\"{o}dinger (DNLS) equation: 
	\begin{equation}\label{4-classicalDNLS}
		\ii q_{t}+q_{xx}+\ii(|q|^{2}q)_{x}=0,
	\end{equation}
which belongs to the Kaup-Newell (KN) system, plays a crucial role in the field of integrable systems. So the corresponding fractional extension of the KN system will be an interesting and natural problem in the theory of the fractional integrable system. In the equation \eqref{4-classicalDNLS}, $|q|=(qq^{*})^{\frac{1}{2}}$, $q^{*}$ is the complex conjugate of $q$, and the subscripts stand for partial derivative.

The DNLS equation was first proposed in $1971$ by Rogister \cite{rogister1971parallel}, and later derived by Mjolhus \cite{mjolhus1976modulational}, Mio, and others \cite{mio1976modified} in $1976$. This equation has many vital applications in different fields, such as long-wavelength dynamics of dispersive Alfv\'{e}n waves in the plasma physics field \cite{rogister1971parallel}, the subpicosecond or femtosecond pulse in a single-mode fiber \cite{ohkuma1987soliton}, the propagation of nonlinear pulses in optical fibers \cite{agrawal2001applications}, and so on. In mathematics, the DNLS equation also received much attention. Kaup and Newell gave the Lax pair of this equation and solved it using the IST \cite{kaup1978exact}. Furthermore, there are many other methods, such as the Hirota method \cite{kakei1995bilinearization}, the Darboux transformation \cite{xu2011darboux,guo2013high,khalique2022first}, and so on, are also used to solve the equation \eqref{4-classicalDNLS}. Based on these results already obtained for the DNLS equation, we want to explore the integrable fractional extension of the DNLS equation.

The organization of this paper is as follows. In Sec.$2$, we associate the KN spectral problem with a family of integrable nonlinear equations by introducing the recursion operator $\mathcal{L}$. Based on the idea in \cite{ablowitz2022fractional}, we extend the above set of integrable nonlinear equations to contain fractional integrable nonlinear equations and give the operator function $\mathcal{F}_{fd}(\mathcal{L})$ corresponding to the fractional DNLS (fDNLS) equation by using the dispersion relation. Note that for the fractional integrable equation in the sense of IST, the spectral matrix ${\bf V}(\lambda;x,t)$ related to time $t$ can not be written in a closed form, while some constraints should be added to it. Then, we use the IST to analyze some properties of eigenfunctions and scattering matrix, which help us find a completeness relation for the squared eigenfunctions of the fDNLS equation. The completeness relation of the squared eigenfunctions provides a spectral representation of the recursion operator $\mathcal{L}$, which corresponds to the fDNLS equation, thus allows us to give the explicit form of the fDNLS equation. In Sec.$3$, we explore the fractional $N$-soliton solution of the fDNLS equation. For the fractional one-soliton solution, we analyze the wave peak, the moving direction of the wave, and the influence of the small parameter $\epsilon$ in fractional power on wave propagation. We also analyze the fractional rational solution by taking the limit of the fractional one-soliton solution. More importantly, we provide detailed and rigorous proof of the fractional one-soliton solution.

\section{The fDNLS equation and its IST}\label{3-sec.2}
In this section, we will give the explicit form of the fDNLS equation, and solve this equation by IST. The construction of an integrable fractional equation requires three key elements: the general evolution equation which can be solved by IST, the anomalous dispersion relation, and the completeness of squared eigenfunctions. 

The anomalous dispersion relation is related to the recursion operator, so we need to give the recursion operator of the DNLS equation first. From a matrix spectral problem with arbitrary parameters, we can construct the generalized DNLS hierarchy \cite{fan-2001}. Then the recursion operator of the DNLS equation can be found. For convenience, we introduce three Pauli's spin matrices:
\begin{equation*}
	\sigma_{1}=\begin{bmatrix}
		0&1\\
		1&0
	\end{bmatrix},\ \ \ \ 
	\sigma_{2}=\begin{bmatrix}
		0&-\ii\\
		\ii&0
	\end{bmatrix},\ \ \ \ 
	\sigma_{3}=\begin{bmatrix}
		1&0\\
		0&-1
	\end{bmatrix}.
\end{equation*}
Now we consider the following spectral problem:
\begin{equation}\label{4-Lax-Phi}
	\begin{split}
		&\Phi_{x}={\bf U}\Phi,\ \ \ \ {\bf U}(\lambda;x,t)=-\ii(\lambda^{2}-\alpha qr)\sigma_{3}+\lambda {\bf Q}(x,t),\ \ \ {\bf Q}(x,t)=\begin{bmatrix}
			0 & q\\[2pt]
			r & 0
		\end{bmatrix},\\
		&\Phi_{t}={\bf V}\Phi,\ \ \ \ {\bf V}(\lambda;x,t)=\begin{bmatrix}
			V_{1}&V_{2}\\[4pt]
			V_{3}&-V_{1}
		\end{bmatrix},
		\end{split}
	\end{equation}
where $\Phi(\lambda;x,t)$ is the wave function, $\lambda\in\mathbb{C}$ is the spectral parameter, $\alpha\in\mathbb{R}$, $q=q(x,t)$ and $r=r(x,t)$ are potential functions, $V_{j}(\lambda;x,t),\ j=1,2,3$ are the quantities depending on $q,\ r,$ and their derivatives.

The compatibility condition or the zero curvature equation of \eqref{4-Lax-Phi}, reads
\begin{equation}\label{4-ZCE}
	{\bf U}_{t}-{\bf V}_{x}+[{\bf U},{\bf V}]=0,\ \ \ \ [{\bf U},{\bf V}]\equiv {\bf UV-VU},
\end{equation}
which implies
\begin{equation}\label{4-V}
	\begin{split}
		\ii\alpha(qr)_{t}&=V_{1x}-\lambda(qV_{3}-r V_{2}),\\[2pt]
		\lambda q_{t}&=V_{2x}+2\ii(\lambda^2-\alpha qr)V_{2}+2\lambda qV_{1},\\[2pt]
		\lambda r_{t}&=V_{3x}-2\ii(\lambda^2-\alpha qr)V_{3}-2\lambda rV_{1}.
	\end{split}
	\end{equation}
Combining with \eqref{4-V}, there is
\begin{equation}\label{4-rewritten-4V2+4V3}
	\lambda\begin{bmatrix}
		q\\[2pt]r
	\end{bmatrix}_{t}=2\lambda V_{10}\sigma_{3}\begin{bmatrix}
		q\\[2pt]r
	\end{bmatrix}+(\mathcal{L}_{1}\mathcal{L}_{2}+2\ii \lambda^{2}\mathcal{L}_{3})\sigma_{3}\begin{bmatrix}
		V_{2}\\[2pt]V_{3}
	\end{bmatrix},
\end{equation}
where $V_{10}$ is an integration constant,
\begin{equation*}
	\begin{split}
		\mathcal{L}_{1}=\mathbb{I}+2\ii \alpha\sigma_{3}\begin{bmatrix}
			q\\[2pt]r
		\end{bmatrix}\partial^{-1}\begin{bmatrix}
			r,&q
		\end{bmatrix},\ \ \ \ \mathcal{L}_{2}=\sigma_{3}\partial_{x}-2\ii\alpha qr\mathbb{I},\ \ \ \ \mathcal{L}_{3}=\mathbb{I}+\ii(2\alpha+1)\sigma_{3}\begin{bmatrix}
			q\\[2pt]r
		\end{bmatrix}\partial^{-1}\begin{bmatrix}
			r,&q
		\end{bmatrix}.
	\end{split}
\end{equation*}
We assume $V_{10}=-2\ii(-\lambda^{2})^{n}$,
\begin{equation*}
	\begin{bmatrix}
		V_{2}\\[2pt]V_{3}
	\end{bmatrix}=\sum_{j=1}^{n}(-1)^{n-j}\begin{bmatrix}
		V_{2j}\\[2pt]V_{3j}
	\end{bmatrix}\lambda^{2(n-j)+1},
\end{equation*}
substitute them into \eqref{4-rewritten-4V2+4V3}, and group terms according to their power of $\lambda$. Then we can get the set of integrable nonlinear equations through the iterating calculation,
\begin{equation}\label{4-evolution eq}
	\begin{bmatrix}
		q\\[2pt]r
	\end{bmatrix}_{t}=-4\ii\mathcal{F}(\mathcal{L})\begin{bmatrix}
		q\\[2pt]-r
	\end{bmatrix},\ \ \ \ \mathcal{F}(\mathcal{L})=\mathcal{L}^{n},
\end{equation}
where
\begin{equation*}
	\begin{split}
	\mathcal{L}=\frac{1}{2\ii}\mathcal{L}_{1}\mathcal{L}_{2}\mathcal{L}_{3}^{-1}=&\frac{1}{2}\sigma_{2}\sigma_{1}\partial_{x}-\left(\alpha+\frac{1}{2}\right){\bf u}_{x}\partial^{-1}{\bf u}^{\top}\sigma_{1}-\left(2\alpha+\frac{1}{2}\right){\bf u}{\bf u}^{\top}\sigma_{1}+\frac{1}{2}\alpha {\bf u}^{\top}\sigma_{1}{\bf u}\mathbb{I}\\
	&-\alpha\sigma_{3}{\bf u}\partial^{-1}{\bf u}_{x}^{\top}\sigma_{1}\sigma_{3}+\alpha\left(2\alpha+\frac{1}{2}\right)\sigma_{2}\sigma_{1}{\bf u}\partial^{-1}{\bf u}^{\top}\sigma_{1}{\bf u}{\bf u}^{\top}\sigma_{1},
\end{split}
\end{equation*}
${\bf u}(x,t)=\begin{bmatrix}
	q,&r
\end{bmatrix}^{\top}$, the superscript $^{\top}$ denotes the transpose. 
The system \eqref{4-evolution eq} is a generalized system, which can yield many integrable equations by choosing different parameters $n$ and $\alpha$. For example, the generalized DNLS hierarchy can be obtained by choosing $n=2$ and $r=\sigma q^{*}\ (\sigma=\pm 1)$. Moreover, the cases of $\alpha=0,\ -\frac{1}{2},\ -\frac{1}{4}$ correspond to the DNLS equation, the  Chen-Lee-Liu equation, and the Gerdjikov-Ivanov equation, respectively. Note that $\mathcal{F}(\mathcal{L})$ can be generalized to the more general form $\mathcal{F}(\lambda)$ by using the properties of squared eigenfunctions, and the function $\mathcal{F}(\lambda)$ can be associated with the linear dispersion relation of \eqref{4-evolution eq}. Without loss of generality, we will take the equation related to $q(x,t)$ in \eqref{4-evolution eq} as an example to deduce the relation between $\mathcal{F}(\lambda)$ and the linear dispersion relation. The linearized form of the equation \eqref{4-evolution eq} can be given by
	\begin{equation*}
		\begin{bmatrix}
			q\\[2pt]r
		\end{bmatrix}_{t}=-4\ii\mathcal{F}\left(\frac{1}{2}\sigma_{2}\sigma_{1}\partial_{x}\right)\begin{bmatrix}
			q\\[2pt]-r
		\end{bmatrix},\ \ \ \ \mathcal{F}\left(\frac{1}{2}\sigma_{2}\sigma_{1}\partial_{x}\right)=\left(\frac{1}{2}\sigma_{2}\sigma_{1}\partial_{x}\right)^{n},
	\end{equation*}
	then 
	\begin{equation}\label{4-evolution-q-linear}
		q_{t}(x,t)=-4\ii\left(-\frac{\ii}{2}\partial_{x}\right)^{n}q(x,t).
	\end{equation}
	Then we substitute the formal solution  $q(x,t)\thicksim\ee^{\ii(\lambda x-\omega(\lambda)t)}$ into \eqref{4-evolution-q-linear}, which yields
	\begin{equation}\label{4-relation-F-dis}
		\mathcal{F}\left(\frac{\lambda}{2}\right)=\frac{1}{4}\omega\left(\lambda\right).
	\end{equation}
	
	In this paper, we will study the fractional equation in the KN system based on the DNLS equation \eqref{4-classicalDNLS}, in which explicit forms for ${\bf U}(\lambda;x,t)$ and ${\bf V}(\lambda;x,t)$ are given by
	\begin{equation*}
		{\bf U}=-\ii\lambda^{2}\sigma_{3}+\lambda {\bf Q},\ \ \ \ {\bf V}=-2\ii\lambda^{4}\sigma_{3}+2\lambda^{3}{\bf Q}-\ii\lambda^{2}\sigma_{3}{\bf Q}^{2}-\lambda(\ii\sigma_{3}{\bf Q}_{x}-{\bf Q}^{3}),
	\end{equation*}
	where $r(x,t)=-q^{*}(x,t)$. Let us rewrite the recursion operator $\mathcal{L}$ which corresponds to the DNLS equation 
	\begin{equation*}
		\mathcal{L}=\frac{1}{2}\sigma_{2}\sigma_{1}\partial_{x}-\frac{1}{2}{\bf u}_{x}\partial^{-1}_{-}{\bf u}^{\top}\sigma_{1}-\frac{1}{2}{\bf u}{\bf u}^{\top}\sigma_{1},\ \ \ \ \partial_{-}^{-1}=\int_{-\infty}^{x}\dd y.
	\end{equation*}
	The DNLS equation corresponds to the operator function $\mathcal{F}_{d}(\mathcal{L})=\mathcal{L}^{2}$. Based on the equation \eqref{4-relation-F-dis}, there is $\omega_{d}=\lambda^{2}$. In \cite{ablowitz2022fractional}, we can find that the linear dispersion relation of the NLS equation is $\omega_{N}(\lambda)=-\lambda^{2}$, then the fractional NLS equation can be obtained according to $\omega_{fN}(\lambda)=-\lambda^{2}|\lambda|^{\epsilon},\ \epsilon\in[0,1)$. Following this rule, we can assume that the anomalous dispersion relation of fDNLS equation is $\omega_{fd}(\lambda)=\lambda^{2}|\lambda|^{\epsilon},\ \epsilon\in[0,1)$. Then the linearized form of the fDNLS equation is
	\begin{equation*}
		\ii q_{t}+|-\partial_{x}^{2}|^{\frac{\epsilon}{2}}q_{xx}=0,
	\end{equation*}
	where $|-\partial^{2}|^{\epsilon},\ \epsilon\in[0,1)$ is called the Riesz fractional derivative \cite{ablowitz2022fractional}. Combining with the relation \eqref{4-relation-F-dis},
	\begin{equation*}
		\mathcal{F}_{fd}(\lambda)=\frac{1}{4}\omega_{fd}(2\lambda)=\lambda^{2}|2\lambda|^{\epsilon},\ \  \epsilon\in[0,1).
	\end{equation*}
	This will lead to the operator function $\mathcal{F}_{fd}(\mathcal{L})$ which corresponds to the fDNLS equation,
	\begin{equation*}
		\mathcal{F}_{fd}(\mathcal{L})=\mathcal{L}^{2}|2\mathcal{L}|^{\epsilon},\ \ \epsilon\in[0,1).
	\end{equation*}
	 We will give the explicit form of the fDNLS equation in
section \ref{4-subsection-fDNLS}.

\subsection{Direct scattering}\label{sec-direct scattering}
In the limit $|x|\to\infty$, we assume that the potential function $q(x,t)$ is sufficiently smooth and rapidly tends to zero. It is significant that the matrix ${\bf V}(\lambda;x,t)$ cannot be written precisely for the fractional integrable equation. But we need to impose a constraint ${\bf V}{\to}-2\ii\mathcal{F}_{fd}(\lambda^{2})\sigma_{3},$ as $|x|{\to}\infty$ on it.

The Lax pair of the fDNLS equation is 
\begin{equation}\label{4-Lax pair-DNLS}
	\Phi_{x}={\bf U}\Phi=(-\ii\lambda^{2}\sigma_{3}+\lambda {\bf Q})\Phi,\ \ \ \  \Phi_{t}={\bf V}\Phi,\ \ \ \ {\bf Q}(x,t)=\begin{bmatrix}
		0&q(x,t)\\[2pt]
		-q^{*}(x,t)&0
	\end{bmatrix}.
\end{equation}
Then we can derive the asymptotic behavior:
\begin{equation}\label{4-Phi-asy-be}
	\Phi^{\pm}\sim\ee^{-\ii\lambda^{2}\sigma_{3}x},\ \ \ x\to\pm\infty,
\end{equation}
where $\Phi^{\pm}(\lambda;x,t)=
\begin{bmatrix}
	\phi^{\pm}_{1},&\phi^{\pm}_{2}
\end{bmatrix}$, the superscripts $^{\pm}$ refer to the cases of $x\to\pm\infty$, respectively. Based on Abel's formula, we know that the determinants of $\Phi^{\pm}(\lambda;x,t)$ are independent of $x$, since ${\rm tr}{\bf Q}(x,t)=0$. Then we have
\begin{equation}\label{4-det-phi}
	\det\Phi^{\pm}=\lim_{x\to\pm\infty}\det\Phi^{\pm}=1.
\end{equation}
Next we consider the Jost solutions $\Psi^{\pm}(\lambda;x,t)=\Phi^{\pm}(\lambda;x,t) \ee^{\ii\lambda^{2}\sigma_{3}x}.$ $\Psi^{\pm}(\lambda;x,t)$ have the asymptotic properties: $\Psi^{\pm}(\lambda;x,t)\to\mathbb{I}$, as $x\to\pm\infty$. Then we can get the equations of $\Psi^{\pm}(\lambda;x,t)$, which is equivalent to the first equation in \eqref{4-Lax pair-DNLS}
\begin{equation}\label{4-Psi}
	\Psi_{x}^{\pm}=-\ii\lambda^{2}[\sigma_{3},\Psi^{\pm}]+\lambda{\bf Q}\Psi^{\pm}.
\end{equation}
Moreover, the Volterra integral equations for $\Psi^{\pm}(\lambda;x,t)$ are given by
\begin{equation}\label{4-Psi-Volterra}
	\Psi^{\pm}(\lambda;x,t)=\mathbb{I}+\lambda\int_{\pm\infty}^{x}\ee^{-\ii\lambda^{2}(x-y){\rm ad}\widehat{\sigma}_{3}}{\bf Q}(y,t)\Psi^{\pm}(\lambda;y,t)dy,
\end{equation}
where $\Psi^{\pm}(\lambda;x,t)=\begin{bmatrix}
	\psi^{\pm}_{1},&\psi^{\pm}_{2}
\end{bmatrix}$, $\ee^{{\rm ad}\widehat{\sigma}_{3}}{\bf X}=\ee^{\sigma_{3}}{\bf X}\ee^{-\sigma_{3}}$ with ${\bf X}$ being a $2\times 2$ matrix. The large-$\lambda$ expansions of the Jost solutions $\Psi^{\pm}(\lambda;x,t)$ are given by
\begin{equation*}
	\Psi^{\pm}(\lambda;x,t)=\exp\left(-\frac{\ii\sigma_{3}}{2}\int_{\pm\infty}^{x}|q(y,t)|^{2}dy\right)+\mathcal{O}(\lambda^{-1}).
\end{equation*}
According to the Volterra integral equations \eqref{4-Psi-Volterra}, the relevant properties of $\Psi^{\pm}(\lambda;x,t)$ or $\Phi^{\pm}(\lambda;x,t)$ can be analyzed, which are summarized in the following proposition. Similar results have been reported in \cite{zhang2020derivative}, so we will not prove them.
\begin{proposition}{\rm (see Lemma $1$ in \cite{pelinovsky-2018})}\label{4-prop-phi-analytic}
	Assume that  $q(x,t)\in\mathrm{L}^{1}(\mathbb{R})\cap\mathrm{L}^{3}(\mathbb{R})$ and $\partial_{x}q(x,t)\in\mathrm{L}^{1}(\mathbb{R})$, then there exist unique
	solutions satisfying the Volterra integral equations \eqref{4-Psi-Volterra} for every $\lambda\in\Sigma$, and the Jost solutions $\Psi^{\pm}(\lambda;x,t)$ possesses the following properties:
	\begin{itemize}
		\item  The column vectors $\psi^{-}_{1}$ and $\psi^{+}_{2}$ are analytic for $\lambda\in D^{+}$ and continuous for $\lambda\in D^{+}\cup\Sigma$,
		\item  The column vectors $\psi^{+}_{1}$ and $\psi^{-}_{2}$ are analytic for $\lambda\in D^{-}$ and continuous for $\lambda\in D^{-}\cup\Sigma$,
	\end{itemize}
	where $\Sigma=\mathbb{R}\cup\ii\mathbb{R}$, $D^{\pm}$ are shown in Fig.\ref{fig:contour}.
\end{proposition}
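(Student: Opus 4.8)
The plan is to solve the Volterra integral equations \eqref{4-Psi-Volterra} by successive approximation and to read the analyticity and continuity directly off the oscillatory kernel. First I would make the action of $\ee^{-\ii\lambda^{2}(x-y)\mathrm{ad}\widehat{\sigma}_{3}}$ explicit: it fixes the diagonal of a $2\times2$ matrix and multiplies the $(1,2)$ and $(2,1)$ entries by $\ee^{-2\ii\lambda^{2}(x-y)}$ and $\ee^{2\ii\lambda^{2}(x-y)}$ respectively. Because this action is entrywise, the two columns of \eqref{4-Psi-Volterra} decouple and each inherits a single exponential factor. Writing $\lambda=a+b\,\ii$ gives $\mathrm{Im}(\lambda^{2})=2ab$, so $\Sigma=\mathbb{R}\cup\ii\mathbb{R}$ is exactly the locus $\{\mathrm{Im}(\lambda^{2})=0\}$, while $D^{+}$ and $D^{-}$ are the regions $\{\mathrm{Im}(\lambda^{2})>0\}$ and $\{\mathrm{Im}(\lambda^{2})<0\}$.

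The sign bookkeeping then fixes the analyticity domains. For $\psi_{1}^{-}$ the integration in \eqref{4-Psi-Volterra} runs over $y<x$ and the surviving factor is $\ee^{2\ii\lambda^{2}(x-y)}$, whose modulus $\ee^{-2(x-y)\mathrm{Im}(\lambda^{2})}$ is $\le1$ exactly when $\mathrm{Im}(\lambda^{2})\ge0$; for $\psi_{2}^{+}$ the integration runs over $y>x$ and the factor $\ee^{-2\ii\lambda^{2}(x-y)}$ is bounded under the same condition. Hence both columns are controlled on $D^{+}\cup\Sigma$, giving the first assertion, and the second follows verbatim after reversing the sign of $x-y$.

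With these bounds I would set $\Psi_{0}=\mathbb{I}$ and define the iterates $\Psi_{k+1}$ by applying the integral operator of \eqref{4-Psi-Volterra} to $\Psi_{k}$. Each $\Psi_{k}$ is entire in $\lambda$ (a polynomial in $\lambda$ multiplied by absolutely convergent integrals), and the ordered Volterra structure gives the standard factorial bound $\|\Psi_{k}(\lambda;x,t)\|\le \tfrac{1}{k!}\bigl(|\lambda|\int|q|\bigr)^{k}$, using only $q\in\mathrm{L}^{1}(\mathbb{R})$. The series therefore converges absolutely and uniformly on compact subsets of $D^{\pm}\cup\Sigma$, so by Weierstrass's theorem the limit is analytic in the open region $D^{\pm}$ and continuous up to $\Sigma$; uniqueness is immediate since the difference of two solutions solves a homogeneous Volterra equation and must vanish.

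The genuine obstacle is that, unlike the AKNS problem in which the potential enters without the spectral parameter, the Kaup--Newell problem \eqref{4-Lax pair-DNLS} carries the factor $\lambda{\bf Q}$. The crude Neumann bound then behaves like $\ee^{|\lambda|\int|q|}$, which degenerates as $|\lambda|\to\infty$; this is harmless for analyticity on compacta but is not enough for uniform control near $\lambda=\infty$ or for the leading normalization $\exp\!\bigl(-\tfrac{\ii\sigma_{3}}{2}\int|q|^{2}\bigr)$. The remedy, and the reason the stronger hypotheses $q\in\mathrm{L}^{1}\cap\mathrm{L}^{3}$ and $q_{x}\in\mathrm{L}^{1}$ are imposed, is to integrate by parts via $\lambda\,\ee^{\pm2\ii\lambda^{2}(x-y)}=\mp\tfrac{1}{2\ii\lambda}\,\partial_{y}\ee^{\pm2\ii\lambda^{2}(x-y)}$, which trades the growing prefactor $\lambda$ for a decaying $\lambda^{-1}$ at the cost of a derivative landing on $q$ (controlled by $q_{x}\in\mathrm{L}^{1}$) together with quadratic and cubic terms in $q$ absorbed by the interpolation $\|q\|_{\mathrm{L}^{2}}\le\|q\|_{\mathrm{L}^{1}}^{1/4}\|q\|_{\mathrm{L}^{3}}^{3/4}$ and by $\|q\|_{\mathrm{L}^{3}}$. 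Carrying out this reduction and re-running the estimates is the only technical part; the analyticity and continuity claims then follow from the locally uniform convergence already established, so I would not belabour the routine computations.
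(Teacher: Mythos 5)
The paper does not actually prove this proposition: it defers to Lemma~1 of \cite{pelinovsky-2018} and to \cite{zhang2020derivative}, so there is no internal argument to compare against. Your reconstruction is the standard one and is essentially the argument those references use: decouple the columns of the Volterra equation \eqref{4-Psi-Volterra}, track the sign of ${\rm Im}(\lambda^{2})$ to identify $D^{\pm}$ as the regions where the surviving exponential is bounded, run the Neumann iteration with the factorial Volterra bound, and conclude analyticity and continuity from locally uniform convergence of entire iterates, with uniqueness from the homogeneous Volterra equation. All of that is correct, and your sign bookkeeping matches the proposition (e.g.\ $\psi_{1}^{-}$ carries $\ee^{2\ii\lambda^{2}(x-y)}$ with $x-y>0$, bounded precisely on ${\rm Im}(\lambda^{2})\ge 0$). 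One point worth making explicit: the literal assertions of the proposition (existence, uniqueness, analyticity in $D^{\pm}$, continuity up to $\Sigma$) already follow from $q\in\mathrm{L}^{1}$ alone, since continuity and analyticity are local in $\lambda$ and the factorial bound $\tfrac{1}{k!}(|\lambda|\|q\|_{\mathrm{L}^{1}})^{k}$ converges locally uniformly everywhere; the hypotheses $q\in\mathrm{L}^{1}\cap\mathrm{L}^{3}$ and $\partial_{x}q\in\mathrm{L}^{1}$ are only needed for the uniform large-$\lambda$ control and the normalization $\exp(-\tfrac{\ii\sigma_{3}}{2}\int_{\pm\infty}^{x}|q|^{2}\dd y)$ stated just before the proposition, exactly as you diagnose. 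So the integration-by-parts step you leave unfinished is not a gap in the proof of the proposition as stated, though it would be needed to justify the asymptotic expansion the paper quotes alongside it.
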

\begin{figure}
	\centering
	\includegraphics[width=0.3\linewidth]{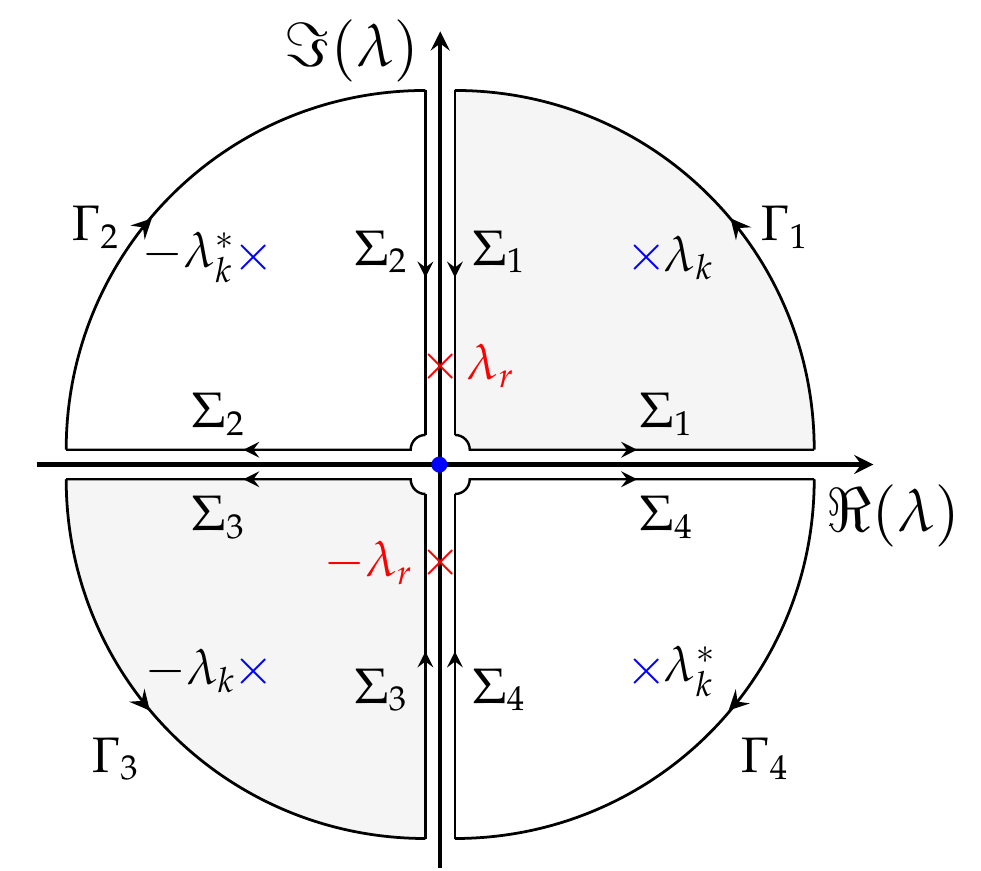}
	\caption{Complex $\lambda$-plane. The grey and white regions represent $D^{+}$ and $D^{-}$, respectively. $\Sigma_{+}=\Sigma_{1}+\Sigma_{3},\ \Sigma_{-}=\Sigma_{2}+\Sigma_{4}$.  $\Gamma_{+}=\Gamma_{1}+\Gamma_{3},\ \Gamma_{-}=\Gamma_{2}+\Gamma_{4}$. Discrete spectrum, and the integral path for IST of the fDNLS equation. $\lambda_{r}$ and $-\lambda_{r}$ are the discrete spectrum corresponding to the fractional rational solution.}
	\label{fig:contour}
\end{figure} 
Both $\Phi^{\pm}(\lambda;x,t)$ are the fundamental solutions of \eqref{4-Lax pair-DNLS}, then there exists a matrix ${\bf S}(\lambda;t)=\left(s_{ij}(\lambda;t)\right)_{i,j=1,2}$ between them obeying the relation:
\begin{equation}\label{4-Phi-S}
	\Phi^{-}(\lambda;x,t)=\Phi^{+}(\lambda;x,t){\bf S}(\lambda;t),\ \ \ \ \lambda\in\Sigma,
\end{equation}
and ${\bf S}(\lambda;t)$ is called the scattering matrix, $s_{ij}(\lambda;t),\ i,j=1,2$ are called the scattering coefficients. Moreover, $\det{\bf S}(\lambda;t){=}1$ can be deduced from \eqref{4-det-phi}. As usual, we define the reflection coefficients $\rho_{1}(\lambda;t)$ and $\rho_{2}(\lambda;t)$:
\begin{equation}\label{4-rho}
	\rho_{1}(\lambda;t)=\frac{s_{21}(\lambda;t)}{s_{11}(\lambda;t)},\ \ \ \ \rho_{2}(\lambda;t)=\frac{s_{12}(\lambda;t)}{s_{22}(\lambda;t)},\ \ \ \ \lambda\in\Sigma.
\end{equation}
In addition, the large-$\lambda$ expansion of ${\bf S}(\lambda;t)$ is
\begin{equation*}
	{\bf S}(\lambda;t)=\exp\left(-\frac{\ii\sigma_{3}}{2}\int_{-\infty}^{+\infty}|q(y,t)|^{2}dy\right)+\mathcal{O}(\lambda^{-1}).
\end{equation*}
By using \eqref{4-Phi-S}, there are
\begin{equation}\label{4-wr}
	\begin{split}
		&s_{11}(\lambda;t)={\rm Wr}(\phi_{1}^{-},\phi_{2}^{+}),\ \ \ \ s_{12}(\lambda;t)={\rm Wr}(\phi_{2}^{-},\phi_{2}^{+}),\\
		&s_{21}(\lambda;t)={\rm Wr}(\phi_{1}^{+},\phi_{1}^{-}),\ \ \ \ s_{22}(\lambda;t)={\rm Wr}(\phi_{1}^{+},\phi_{2}^{-}).
	\end{split}
\end{equation}
Therefore, $s_{11}(\lambda;t)$ and $s_{22}(\lambda;t)$ are analytic in $D^{\pm}$, respectively. Generally, the off-diagonal scattering
coefficients cannot be extended off the contour $\Sigma$.
\begin{proposition}\label{prop-sym}
	The fundamental solution $\Phi(\lambda;x,t)$, the Jost solution $\Psi(\lambda;x,t)$, and the scattering matrix ${\bf S}(\lambda;t)$ all have two symmetry reductions:
	\begin{itemize}
		\item $\Phi(\lambda;x,t)=\sigma_{2}\Phi^{*}(\lambda^{*};x,t)\sigma_{2}$,\ \ \ $\Psi(\lambda;x,t)=\sigma_{2}\Psi^{*}(\lambda^{*};x,t)\sigma_{2}$,\ \ \ ${\bf S}(\lambda;t)=\sigma_{2}{\bf S}^{*}(\lambda^{*};t)\sigma_{2}$.
		\item $\Phi(\lambda;x,t)=\sigma_{3}\Phi(-\lambda;x,t)\sigma_{3}$,\ \ \ \ $\Psi(\lambda;x,t)=\sigma_{3}\Psi(-\lambda;x,t)\sigma_{3}$,\ \ \ ${\bf S}(\lambda;t)=\sigma_{3}{\bf S}(-\lambda;t)\sigma_{3}$.
	\end{itemize}
\end{proposition}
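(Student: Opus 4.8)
The plan is to establish all three symmetries in each reduction by the standard uniqueness argument: for a given symmetry operation I construct a transformed matrix out of $\Phi^{\pm}$, show that it obeys exactly the scattering problem \eqref{4-Lax pair-DNLS} with exactly the normalization \eqref{4-Phi-asy-be}, and then invoke the existence–uniqueness statement of Proposition \ref{4-prop-phi-analytic} to identify it with the original Jost/fundamental solution. The symmetries of ${\bf S}(\lambda;t)$ are then harvested from the definition \eqref{4-Phi-S}, $\Phi^{-}=\Phi^{+}{\bf S}$, using that $\Phi^{+}$ is invertible by \eqref{4-det-phi}. The key observation making this work cheaply is that both operations map the normalization at $x\to+\infty$ to itself and the one at $x\to-\infty$ to itself, so the label $\pm$ is preserved and the two cases can be treated in one stroke.

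For the first reduction I would set $\widetilde{\Phi}(\lambda)=\sigma_{2}\Phi^{*}(\lambda^{*})\sigma_{2}$ and check it solves the $x$-part. Everything reduces to two elementary Pauli identities, $\sigma_{2}\sigma_{3}\sigma_{2}=-\sigma_{3}$ and $\sigma_{2}{\bf Q}^{*}\sigma_{2}={\bf Q}$, the latter because conjugating $q\mapsto q^{*}$ and sandwiching with $\sigma_{2}$ restores the off-diagonal structure $\begin{bmatrix} 0 & q\\ -q^{*} & 0\end{bmatrix}$. Since complex conjugation sends $-\ii(\lambda^{*})^{2}\mapsto\ii\lambda^{2}$, these give $\sigma_{2}{\bf U}^{*}(\lambda^{*})\sigma_{2}={\bf U}(\lambda)$, and inserting $\sigma_{2}^{2}=\mathbb{I}$ yields $\widetilde{\Phi}_{x}={\bf U}(\lambda)\widetilde{\Phi}$. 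The normalization survives because $\sigma_{2}\bigl(\ee^{-\ii(\lambda^{*})^{2}\sigma_{3}x}\bigr)^{*}\sigma_{2}=\sigma_{2}\ee^{\ii\lambda^{2}\sigma_{3}x}\sigma_{2}=\ee^{-\ii\lambda^{2}\sigma_{3}x}$, again by $\sigma_{2}\sigma_{3}\sigma_{2}=-\sigma_{3}$. Uniqueness gives the $\Phi$-symmetry; right-multiplying by $\ee^{\ii\lambda^{2}\sigma_{3}x}$ and using $\sigma_{2}\ee^{-\ii\lambda^{2}\sigma_{3}x}\sigma_{2}=\ee^{\ii\lambda^{2}\sigma_{3}x}$ transfers it to $\Psi$; finally conjugating $\Phi^{-}=\Phi^{+}{\bf S}$ and cancelling $\Phi^{+}(\lambda)$ produces ${\bf S}(\lambda;t)=\sigma_{2}{\bf S}^{*}(\lambda^{*};t)\sigma_{2}$.

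The second reduction runs in exact parallel with $\sigma_{2}$ replaced by $\sigma_{3}$ and $\lambda^{*}$ by $-\lambda$. Now the decisive facts are $(-\lambda)^{2}=\lambda^{2}$ and $\sigma_{3}{\bf Q}\sigma_{3}=-{\bf Q}$, so the sign flip in the term $\lambda{\bf Q}$ induced by $\lambda\mapsto-\lambda$ is exactly compensated by the conjugation, giving $\sigma_{3}{\bf U}(-\lambda)\sigma_{3}={\bf U}(\lambda)$. Because $(-\lambda)^{2}=\lambda^{2}$, the exponential normalization is manifestly fixed by $\sigma_{3}(\cdot)\sigma_{3}$, and the same uniqueness-plus-Wronskian pipeline delivers the $\Phi$, $\Psi$, and ${\bf S}$ symmetries.

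The only genuine subtlety, which I expect to be the main point requiring comment, is the time part $\Phi_{t}={\bf V}\Phi$: for the fractional equation ${\bf V}(\lambda;x,t)$ has no closed form, so one cannot verify $\sigma_{j}{\bf V}^{(*)}\sigma_{j}={\bf V}$ entry by entry. I sidestep this by observing that, at each fixed $t$, the Jost solutions—and hence ${\bf S}(\lambda;t)$—are already determined by the $x$-equation together with the $t$-independent normalization \eqref{4-Phi-asy-be}, so all three symmetries follow from the covariance of ${\bf U}$ alone. For internal consistency one should still check that the imposed boundary constraint ${\bf V}\to-2\ii\mathcal{F}_{fd}(\lambda^{2})\sigma_{3}$ as $|x|\to\infty$ respects both reductions; this is immediate, since under $\lambda\mapsto-\lambda$ one has $(-\lambda)^{2}=\lambda^{2}$ and $\sigma_{3}\sigma_{3}\sigma_{3}=\sigma_{3}$, while under $\lambda\mapsto\lambda^{*}$ the conjugation-induced flip $-2\ii\mapsto2\ii$ combines with $\sigma_{2}\sigma_{3}\sigma_{2}=-\sigma_{3}$ and the reality property $\bigl(\mathcal{F}_{fd}((\lambda^{*})^{2})\bigr)^{*}=\mathcal{F}_{fd}(\lambda^{2})$ to restore the boundary form.
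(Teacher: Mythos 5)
Your proposal is correct and follows essentially the same route as the paper: verify the covariance of ${\bf U}$ under each operation (via $\sigma_{2}\sigma_{3}\sigma_{2}=-\sigma_{3}$, $\sigma_{2}{\bf Q}^{*}\sigma_{2}={\bf Q}$, $\sigma_{3}{\bf Q}\sigma_{3}=-{\bf Q}$), deduce the symmetry of $\Phi$ by uniqueness with the preserved normalization, and transfer it to $\Psi$ and ${\bf S}$ through \eqref{4-Psi} and \eqref{4-Phi-S}. The paper states this in one sentence; your write-up simply supplies the details, plus a sensible consistency check on the boundary form of ${\bf V}$ that the paper omits.
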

\begin{proof}
	It is easy to find the symmetry reductions of the matrix ${\bf U}(\lambda;x,t)$, and therefore the symmetry reductions of $\Phi(\lambda;x,t)$ can be obtained according to the equation \eqref{4-Lax pair-DNLS}. And the symmetry reductions of $\Psi(\lambda;x,t)$ and ${\bf S}(\lambda;t)$ can also be directly deduced via equations \eqref{4-Psi} and \eqref{4-Phi-S}.
\end{proof}
According to the symmetry properties of ${\bf S}(\lambda;t)$ in proposition \ref{prop-sym}, we can derive $s_{11}(\lambda;t){=}s_{11}(-\lambda;t)$ ${=}s_{22}^{*}(\lambda^{*};t){=}s_{22}^{*}(-\lambda^{*};t)$. Then the zeros of $s_{11}(\lambda;t)$ appear in pairs, and we can suppose that $s_{11}(\lambda;t)$ has simple zeros defined by $\lambda_{n},\ n{=}1,2,\cdots,N$ in the ${\rm\uppercase\expandafter{\romannumeral1}}$ quadrant, and $\lambda_{n+N}{=}-\lambda_{n}$ in the ${\rm\uppercase\expandafter{\romannumeral3}}$ quadrant. That is to say, $s_{11}(\lambda_{j};t){=}0$, $s_{11}^{'}(\lambda_{j};t){\neq}0$, $j{=}1,2,\cdots,2N$, the superscript $^{'}$ denotes the partial derivative with respect to $\lambda$. Then $s_{22}(\lambda_{j}^{*};t){=}0,\ s_{22}^{'}(\lambda_{j}^{*};t){\neq}0,\ j{=}1,2,\cdots,2N$. So the discrete
spectrum can be defined by the set:
\begin{equation*}
	\Lambda=\big\{\lambda_{k},\ -\lambda_{k},\ \lambda_{k}^{*},\ -\lambda_{k}^{*}\big\}_{k=1}^{N},
\end{equation*}
whose distributions are shown in Fig.\ref{fig:contour}. Based on the equation \eqref{4-wr}, it can be seen that when $s_{11}(\lambda_{j};t)=0,\ j=1,2,\cdots,2N$,  $\phi_{1}^{-}(\lambda_{j};x,t)$ and $\phi_{2}^{+}(\lambda_{j};x,t)$ must be
proportional
\begin{equation}\label{relation-a}
	\phi_{1}^{-}(\lambda_{j};x,t)=a_{j}(\lambda_{j};t)\phi_{2}^{+}(\lambda_{j};x,t),\ \ \ \ j=1,2,\cdots,2N,
\end{equation}
where $a_{j}(\lambda_{j};t):=a_{j}\neq0$. Similarly, when $s_{22}(\lambda_{j}^{*};t){=}0,\ j{=}1,2,\cdots,2N$, there is
\begin{equation}\label{relation-b}
	\phi_{2}^{-}(\lambda_{j}^{*};x,t)=b_{j}(\lambda_{j}^{*};t)\phi_{1}^{+}(\lambda_{j}^{*};x,t),\ \ \ \ j=1,2,\cdots,2N,
\end{equation}
where $b_{j}(\lambda_{j}^{*};t):=b_{j}\neq0$. Furthermore, the relations  $a_{j}=-b_{j}^{*},\ a_{j+N}=-a_{j}$ can be derived by combining \eqref{relation-a}, \eqref{relation-b}, together with the symmetry reductions of $\Phi(\lambda;x,t)$ in proposition \ref{prop-sym}.

\subsection{Time evolution}

The time evolution of the scattering data can be obtained by analyzing the asymptotic behavior of the associated time evolution operator ${\bf V}(\lambda;x,t)$, which cannot be represented generally. From the section \ref{sec-direct scattering}, we know ${\bf V}{\to}-2\ii\mathcal{F}_{fd}(\lambda^{2})\sigma_{3}$ as $|x|{\to}\infty$, then 
	\begin{equation*}
		\begin{split}
			&s_{11}(\lambda;t)=s_{11}(\lambda;0),\ \ \ \  s_{12}(\lambda;t)=\ee^{4\ii\mathcal{F}_{fd}(\lambda^{2})t}s_{12}(\lambda;0),\\
			&s_{22}(\lambda;t)=s_{22}(\lambda;0),\ \ \ \ s_{21}(\lambda;t)=\ee^{-4\ii\mathcal{F}_{fd}(\lambda^{2})t}s_{21}(\lambda;0).\\
		\end{split}
	\end{equation*}
	In addition, there are
	\begin{equation*}
		a_{j}(\lambda_{j};t)=\ee^{-4\ii\mathcal{F}_{fd}(\lambda_{j}^{2})t}a_{j}(\lambda_{j};0),\ \ \ \ b_{j}(\lambda_{j}^{*};t)=\ee^{4\ii\mathcal{F}_{fd}(\lambda_{j}^{*2})t}b_{j}(\lambda_{j}^{*};0).
\end{equation*}

\subsection{Inverse scattering}
Now we consider the inverse problem in terms of the relation \eqref{4-Phi-S}. By reviewing the analytic properties of $\Psi^{\pm}(\lambda;x,t)$, we can define a sectional analytic matrix ${\bf M}(\lambda;x,t)$:
\begin{equation}\label{4-M-def}
	\begin{split}
		&{\bf M}_{+}(\lambda;x,t)=\begin{bmatrix}
			\dfrac{\psi^{-}_{1}(\lambda;x,t)}{s_{11}(\lambda;t)},&\psi^{+}_{2}(\lambda;x,t)
		\end{bmatrix},\ \ \ \ \lambda\in D^{+},\\[4pt]
		&{\bf M}_{-}(\lambda;x,t)=\begin{bmatrix}
			\psi^{+}_{1}(\lambda;x,t),&\dfrac{\psi^{-}_{2}(\lambda;x,t)}{s_{22}(\lambda;t)}
		\end{bmatrix},\ \ \ \ \lambda\in D^{-}.
	\end{split}
\end{equation}
Then we can formulate the following Riemann-Hilbert problem.
\begin{rhp}\label{rhp-1}
We can find the matrix ${\bf M}(\lambda;x,t)$ with the following properties:
\begin{itemize}
	\item ${\bf Analyticity:}$\ ${\bf M}_{\pm}(\lambda;x,t)$ are sectionally meromorphic in $D^{\pm}\backslash\Lambda$, and have the simple poles in $\Lambda$, whose principal parts of the Laurent series at each simple pole $\lambda_{k}$ or $\lambda_{k}^{*}$, are determined as
	\begin{equation*}
		\begin{split}
			&\underset{\lambda=\lambda_{k}}{\rm{Res}}\ {\bf M}(\lambda;x,t)=\begin{bmatrix}
				\dfrac{a_{k}\psi_{2}^{+}(\lambda_{k};x,t)}{s_{11}^{'}(\lambda_{k};t)} \exp\left(2\ii\left(\lambda_{k}^{2}x+2\mathcal{F}_{fd}(\lambda_{k}^{2})t\right)\right),&\ \ \ \ \ 0
			\end{bmatrix},\\[4pt]
			&\underset{\lambda=\lambda_{k}^{*}}{\rm{Res}}\ {\bf M}(\lambda;x,t)=\begin{bmatrix}
				0,&\dfrac{b_{k}\psi_{1}^{+}(\lambda_{k}^{*};x,t)}{s_{22}^{'}(\lambda_{k}^{*};t)}\exp\left(-2\ii\left(\lambda_{k}^{*2}x+2\mathcal{F}_{fd}(\lambda_{k}^{*2})t\right)\right)
			\end{bmatrix},
		\end{split}
	\end{equation*}
	where the superscript $^{'}$ denotes the partial derivative with respect to $\lambda$.
	\item ${\bf Jump\ condition:}$\ ${\bf M}(\lambda;x,t)$ satisfies the jump condition:
	\begin{equation*}
		{\bf M}_{+}(\lambda;x,t)={\bf M}_{-}(\lambda;x,t)\ \left(\mathbb{I}-{\bf J}(\lambda;x,t)\right),\ \ \lambda\in\Sigma,
	\end{equation*}
	where
	\begin{equation*}
		{\bf J}(\lambda;x,t)=\exp\left(-\ii\big(\lambda^{2}x+2\mathcal{F}_{fd}(\lambda^{2})t\big){\rm ad}\widehat{\sigma}_{3}\right)\begin{bmatrix}
			\rho_{1}(\lambda;t)\rho_{2}(\lambda;t)&\rho_{2}(\lambda;t)\\[4pt]
			-\rho_{1}(\lambda;t)&0
		\end{bmatrix}.
	\end{equation*}
	\item ${\bf Asymptotic\ behavior:}$\
	\begin{equation*}
		{\bf M}(\lambda;x,t)=\exp\left(\frac{\ii\sigma_{3}}{2}\int_{x}^{+\infty}|q(y,t)|^{2}dy\right)+\mathcal{O}(\lambda^{-1}),\ \ \ \ {\rm as}\ \lambda\to\infty.
	\end{equation*}
\end{itemize}		
\end{rhp}
To solve the above Riemann-Hilbert problem, we need to regularize it by subtracting the  pole contributions and the asymptotic behavior. So we define a new matrix ${\bf M}^{[1]}(\lambda;x,t)$ as follows:
\begin{equation*}
{\bf M}^{[1]}_{\pm}(\lambda;x,t){=}{\bf M}_{\pm}(\lambda;x,t){-}\exp\left(\frac{\ii\sigma_{3}}{2}\int_{x}^{+\infty}|q(y,t)|^{2}dy\right){-}\sum_{k=1}^{2N}\left(\dfrac{\underset{\lambda=\lambda_{k}}{\rm{Res}}{\bf M}(\lambda;x,t)}{\lambda-\lambda_{k}}{+}\dfrac{\underset{\lambda=\lambda_{k}^{*}}{\rm{Res}}{\bf M}(\lambda;x,t)}{\lambda-\lambda_{k}^{*}}\right),\ \lambda\in D^{\pm}.
\end{equation*}
Note that ${\bf M}^{[1]}_{\pm}(\lambda;x,t)$ are sectionally meromorphic in $D^{\pm}$, and 
${\bf M}^{[1]}(\lambda;x,t)=\mathcal{O}(\lambda^{-1})\ {\rm as}\ \lambda\to\infty$.
Moreover,
\begin{equation}\label{4-regular-Jump}
	{\bf M}^{[1]}_{+}(\lambda;x,t)-{\bf M}^{[1]}_{-}(\lambda;x,t)=-{\bf M}_{-}(\lambda;x,t){\bf J}(\lambda;x,t),\ \ \lambda\in\Sigma.
\end{equation}
Here we introduce the Cauchy projectors $\mathcal{P}_{\pm}$ over $\Sigma$ \cite{biondini2014inverse} defined by
\begin{equation*}
	\mathcal{P}_{\pm}[f](\lambda)=\frac{1}{2\pi\ii}\int_{\Sigma}\dfrac{f(\zeta)}{\zeta-(\lambda\pm\ii 0)}\dd\zeta,
\end{equation*}
where the notation $\lambda\pm\ii0$ indicates that the limit is taken from the left/right of $\lambda$ along the direction. Based on the Plemelj's formulae, there are $\mathcal{P}_{\pm}f_{\pm}=\pm f_{\pm}$, $\mathcal{P}_{+}f_{-}=\mathcal{P}_{-}f_{+}=0$, when $f_{\pm}$ are analytic in $D^{\pm}$, and are equal to $\mathcal{O}(\lambda^{-1})$ as $\lambda\to\infty$. And we introduce the notations $\Sigma_{\pm}$, which refer to the integral paths along the gray area and the white area indicated by arrows in Fig.\ref{fig:contour}. Applying the Cauchy projectors to the equation \eqref{4-regular-Jump}, there is
\begin{equation*}
	{\bf M}^{[1]}_{\pm}(\lambda;x,t)=-\frac{1}{2\pi\ii}\int_{\Sigma_{\pm}}\dfrac{{\bf M}_{-}(\zeta;x,t){\bf J}(\zeta;x,t)}{\zeta-(\lambda\pm\ii 0)}\dd\zeta.
\end{equation*}
Therefore,
\begin{equation}\label{4-M-explicit-infity}
	\begin{split}
	&{\bf M}_{\pm}(\lambda;x,t)
	=\exp\left(\frac{\ii\sigma_{3}}{2}\int_{x}^{+\infty}|q(y,t)|^{2}dy\right)-\frac{1}{2\pi\ii}\int_{\Sigma_{\pm}}\dfrac{{\bf M}_{-}(\zeta;x,t){\bf J}(\zeta;x,t)}{\zeta-(\lambda\pm\ii 0)}\dd\zeta\\[4pt]
	+&\sum_{k=1}^{2N}\left[\dfrac{c_{k} \psi_{2}^{+}(\lambda_{k};x,t) }{\lambda-\lambda_{k}}\exp\left(2\ii\big(\lambda_{k}^{2}x+2\mathcal{F}_{fd}(\lambda_{k}^{2})t\big)\right),\ \dfrac{d_{k}\psi_{1}^{+}(\lambda_{k}^{*};x,t)}{\lambda-\lambda_{k}^{*}}\exp\left(-2\ii\big(\lambda_{k}^{*2}x+2\mathcal{F}_{fd}(\lambda_{k}^{*2})t\big)\right)\right],
\end{split}
\end{equation}
where $c_{k}{=}c_{k}(\lambda_{k};t){=}\frac{a_{k}}{s_{11}^{'}(\lambda_{k};t)},\ d_{k}{=} d_{k}(\lambda_{k}^{*};t){=}\frac{b_{k}}{s_{22}^{'}(\lambda_{k}^{*};t)}$. Combining with the symmetries of ${\bf S}(\lambda;t)$ and the relations of $a_{k}$ and $b_{k}$, we can deduce
\begin{equation*}
	d_{k}(\lambda_{k}^{*};t)=-c^{*}_{k}(\lambda_{k};t)=-c^{*}_{k}(-\lambda_{k};t),\ \ \ \ c_{k+N}(\lambda_{k+N};t)=c_{k}(\lambda_{k};t).
\end{equation*}

Then we can recover the potential function $q(x,t)$ from ${\bf M}_{+}(\lambda;x,t)$. Firstly, we expand ${\bf M}_{+}(\lambda;x,t)$ at large-$\lambda$ as
\begin{equation}\label{4-M-expand}
	{\bf M}_{+}(\lambda;x,t)={\bf M}_{+,0}(x,t)+\dfrac{{\bf M}_{+,1}(x,t)}{\lambda}+\mathcal{O}(\lambda^{-2}),
\end{equation}
and we know 
\begin{equation*}
	\begin{split}
	&{\bf M}_{+,0}(x,t)=\exp\left(\frac{\ii\sigma_{3}}{2}\int_{x}^{+\infty}|q(y,t)|^{2}dy\right),\\
	&{\bf M}_{+,1}(x,t)=\frac{1}{2\pi\ii}\int_{\Sigma_{+}}{\bf M}_{-}(\zeta;x,t){\bf J}(\zeta;x,t)\dd\zeta+\sum_{k=1}^{2N}\Big[c_{k}\psi_{2}^{+}(\lambda_{k};x,t)\exp\left(2\ii\left(\lambda_{k}^{2}x+2\mathcal{F}_{fd}(\lambda_{k}^{2})t\right)\right),\\
	&\ \ \ \ \ \ \ \ \ \ \ \ \ \ \ \ \ \ \ \ \ \ \ \ \ \ \ \ \ \ \ \ \ \ \ \ \ \ \ \ \ \ \ \ \ \ \ \ \ \ \ \ \ \ \ \ \ \ \ \ \ \ \ \ \ \ \ \ \ \ \ \ \ \ \ \ \ \ \ \ \ \ \ \ \ \ \ \ \ \  d_{k}\psi_{1}^{+}(\lambda_{k}^{*};x,t)\exp\left(-2\ii\left(\lambda_{k}^{*2}x+2\mathcal{F}_{fd}(\lambda_{k}^{*2})t\right)\right)\Big].
	\end{split}
\end{equation*}
Then the potential function $q(x,t)$ can be recovered by substituting the expansion of ${\bf M}_{+}(\lambda;x,t)$ (i.e.\eqref{4-M-expand}) into \eqref{4-Psi}, and collecting the same powers of $\lambda$. We derive
\begin{equation*}
	{\bf Q}(x,t)=\ii\big[\sigma_{3},{\bf M}_{+,1}(x,t)\big]\ ({\bf M}_{+,0}(x,t))^{-1}.
\end{equation*}
Therefore,
\begin{equation*}
	\begin{split}
	q(x,t)=\exp\left(\frac{\ii}{2}\int_{x}^{+\infty}|q(y,t)|^{2}dy\right)\bigg(\frac{1}{\pi}\int_{\Sigma_{+}}&\Big({\bf M}_{-}(\zeta;x,t){\bf J}(\zeta;x,t)\Big)_{12}\dd\zeta\\
	&+2\ii\sum_{k=1}^{2N}d_{k}\psi_{11}^{+}(\lambda_{k}^{*};x,t)\exp\left(-2\ii\big(\lambda_{k}^{*2}x+2\mathcal{F}_{fd}(\lambda_{k}^{*2})t\big)\right)\bigg).
\end{split}
\end{equation*}

Next we will try to give the explicit expressions for $s_{11}(\lambda;t)$ and $s_{22}(\lambda;t)$ by constructing a new analytic function $\beta(\lambda;t)$:
\begin{equation*}
	\begin{split}
		&\beta_{+}(\lambda;t)=s_{11}(\lambda;t)\exp\left(\frac{\ii}{2}\int_{-\infty}^{+\infty}|q(y,t)|^{2}dy\right)\prod_{k=1}^{2N}\dfrac{\lambda-\lambda_{k}^{*}}{\lambda-\lambda_{k}},\ \ \ \ \ \ \lambda\in D^{+},\\
		&\beta_{-}(\lambda;t)=s_{22}(\lambda;t)\exp\left(-\frac{\ii}{2}\int_{-\infty}^{+\infty}|q(y,t)|^{2}dy\right)\prod_{k=1}^{2N}\dfrac{\lambda-\lambda_{k}}{\lambda-\lambda_{k}^{*}},\ \ \ \ \lambda\in D^{-}.
	\end{split}
\end{equation*}
Obviously, $\beta_{\pm}(\lambda;t)$ are analytic in $D^{\pm}$, and $\beta_{\pm}(\lambda;t)\to1$ as $\lambda\to\infty$. In addition, we have the relation:
\begin{equation*}
	\log\beta_{+}(\lambda;t)+\log\beta_{-}(\lambda;t)=-\log\Big(1-\rho_{1}(\lambda;t)\rho_{2}(\lambda;t)\Big),\ \ \ \ \lambda\in\Sigma.
\end{equation*}
Applying the Cauchy projectors to  the above equation, we can get
\begin{equation*}
	\log\beta_{\pm}(\lambda;t)=\mp\frac{1}{2\pi\ii}\int_{\Sigma_{\pm}}\dfrac{\log\Big(1-\rho_{1}(\zeta;t)\rho_{2}(\zeta;t)\Big)}{\zeta-(\lambda\pm\ii 0)}\dd\zeta.
\end{equation*}
Then the expressions of $s_{11}(\lambda;t)$ and $s_{22}(\lambda;t)$ are as follows:
\begin{equation*}
	\begin{split}
		&s_{11}(\lambda;t)=\exp\Bigg(-\frac{1}{2\pi\ii}\int_{\Sigma_{+}}\dfrac{\log\Big(1-\rho_{1}(\zeta;t)\rho_{2}(\zeta;t)\Big)}{\zeta-(\lambda+\ii 0)}\dd\zeta-\frac{\ii}{2}\int_{-\infty}^{+\infty}|q(y,t)|^{2}dy\Bigg)\prod_{k=1}^{2N}\dfrac{\lambda-\lambda_{k}}{\lambda-\lambda_{k}^{*}},\\[4pt]
		&s_{22}(\lambda;t)=\exp\Bigg(\frac{1}{2\pi\ii}\int_{\Sigma_{-}}\dfrac{\log\Big(1-\rho_{1}(\zeta;t)\rho_{2}(\zeta;t)\Big)}{\zeta-(\lambda-\ii 0)}\dd\zeta+\frac{\ii}{2}\int_{-\infty}^{+\infty}|q(y,t)|^{2}dy\Bigg)\prod_{k=1}^{2N}\dfrac{\lambda-\lambda_{k}^{*}}{\lambda-\lambda_{k}}.
	\end{split}
\end{equation*}

\subsection{Explicit form of the fDNLS equation}\label{4-subsection-fDNLS}
In order to find the explicit form of the fDNLS equation, the first question to consider is how the recursion operator function $\mathcal{F}_{fd}(\mathcal{L})$ acts on functions. Note that squared eigenfunctions are eigenfunctions of the recursion operator of integrable equations. So we can let the recursion operator $\mathcal{L}$ act on squared eigenfunctions, and then generalize this to the case of recursion operator function $\mathcal{F}_{fd}(\mathcal{L})$. Then we need to consider how to connect squared eigenfunctions with potential functions, which can be achieved through the completeness of squared eigenfunctions. The completeness of squared eigenfunctions of the fDNLS equation is closely related to the perturbation theory (or variational relations) \cite{yang-2010}, which can be found in detail in the literatures \cite{kawata1980generalized,kawata1980linear}, and we will be briefly described below.

Firstly, we introduce squared eigenfunctions ${\bf\Omega}_{j}(\lambda;x,t)=\begin{bmatrix}
	\phi_{1j}^{2}(\lambda;x,t),&
	\phi_{2j}^{2}(\lambda;x,t)
\end{bmatrix}^{\top}$, where
\begin{equation*}
	{\bf\Omega}_{+,1}=\begin{bmatrix}
		(\phi_{11}^{-})^{2}\\[4pt]
		(\phi_{21}^{-})^{2}
	\end{bmatrix},\ \ \ \ 
	{\bf\Omega}_{+,2}=\begin{bmatrix}
		(\phi_{12}^{+})^{2}\\[4pt]
		(\phi_{22}^{+})^{2}
	\end{bmatrix},\ \ \ \ 
	{\bf\Omega}_{-,1}=\begin{bmatrix}
		(\phi_{11}^{+})^{2}\\[4pt]
		(\phi_{21}^{+})^{2}
	\end{bmatrix},\ \ \ \ 
	{\bf\Omega}_{-,2}=\begin{bmatrix}
		(\phi_{12}^{-})^{2}\\[4pt]
		(\phi_{22}^{-})^{2}
	\end{bmatrix},
\end{equation*}
the subscripts $_{\pm}$ indicate that the functions ${\bf\Omega}_{\pm,j}(\lambda;x,t)$ are analytic in $D^{\pm}$, respectively. And we can calculate that ${\bf\Omega}_{j}(\lambda;x,t)$ satisfies 
\begin{equation}\label{4-Omega-prop}
		{\bf\Omega}_{j,x}(\lambda;x,t)+2\ii\lambda^{2}\sigma_{3}{\bf\Omega}_{j}(\lambda;x,t)=-\lambda\big\langle\phi_{j}(\lambda;x,t)\big|\sigma_{3}\big|\phi_{j}(\lambda;x,t)\big\rangle\sigma_{1}{\bf w}(x,t),
\end{equation}
where ${\bf w}(x,t)=\begin{bmatrix}
	-q^{*},&q
\end{bmatrix}^{\top}$, the ``ket" $|{\bf v}\rangle$ represents a usual column vector ${\bf v}=|{\bf v}\rangle=\begin{bmatrix}
v_{1},&v_{2}
\end{bmatrix}^{\top}$, and the ``bra" $\langle{\bf v}|=\begin{bmatrix}
-v_{2},&v_{1}
\end{bmatrix}$ is its adjoint raw vector. Through a direct calculation, we can derive $\mathcal{L}(\sigma_{3}\partial_{x}{\bf\Omega}_{j})=-\lambda^{2}\sigma_{3}\partial_{x}{\bf\Omega}_{j},$ 
here the recursion operator $\mathcal{L}$ corresponds to the DNLS equation. Then we generalize it to
\begin{equation*}
	\mathcal{F}_{fd}(\mathcal{L})(\sigma_{3}\partial_{x}{\bf\Omega}_{j})=\mathcal{F}_{fd}(-\lambda^{2})\sigma_{3}\partial_{x}{\bf\Omega}_{j}.
\end{equation*}

Next, we will deduce the variational relations of the fDNLS equation. Let us consider a perturbation $\Delta{\bf Q}(x,t)$ for the first equation of \eqref{4-Lax-Phi}, then the corresponding variation of  $\Delta\Phi^{\pm}(\lambda;x,t)$ can be described as
\begin{equation*}
	\Delta\Phi^{\pm}_{x}(\lambda;x,t)={\bf U}(\lambda;x,t)\Delta\Phi^{\pm}(\lambda;x,t)+\lambda\Delta{\bf Q}(x,t)\Phi^{\pm}(\lambda;x,t),\ \ \ \ \ \Delta\Phi^{\pm}(\lambda;x,t)\to 0,\ \ {\rm as\ }x\to\pm\infty.
\end{equation*}
The above equation can be easily solved
\begin{equation}\label{4-delta-Phi}
	\Delta\Phi^{\pm}(\lambda;x,t)=\Phi^{\pm}(\lambda;x,t)\int_{\pm\infty}^{x}\lambda\big(\Phi^{\pm}(\lambda;y,t)\big)^{-1}\Delta{\bf Q}(y,t)\Phi^{\pm}(\lambda;y,t)\dd y.
\end{equation}
On the other hand, considering the variation of \eqref{4-Phi-S}, we can derive
\begin{equation}\label{4-var-S}
	\Delta{\bf S}(\lambda;t)=\left(\Phi^{+}(\lambda;x,t)^{-1}\left(\Delta\Phi^{-}(\lambda;x,t)-\Delta\Phi^{+}(\lambda;x,t){\bf S}(\lambda;t)\right)\right).
\end{equation}
Then substituting \eqref{4-delta-Phi} into \eqref{4-var-S}, and combining the relation \eqref{4-Phi-S}, 
\begin{equation*}
	\Delta{\bf S}(\lambda;t)=\int_{-\infty}^{\infty}\lambda\big(\Phi^{+}(\lambda;y,t)\big)^{-1}\Delta{\bf Q}(y,t)\Phi^{-}(\lambda;y,t)\dd y.
\end{equation*}
Combining with the above equation, we can obtain
\begin{equation}\label{4-var-rho}
	\begin{split}
		&\Delta\rho_{1}(\lambda;t)=\frac{1}{s_{11}^{2}}\big(s_{11}\Delta s_{21}-s_{21}\Delta s_{11}\big)=\dfrac{\lambda}{s_{11}^{2}(\lambda;t)} \int_{-\infty}^{+\infty}\big\langle{\bf\Omega}_{+,1}(\lambda;y,t)\big|\sigma_{1}\big|\Delta{\bf w}(y,t)\big\rangle\dd y,\\[2pt]
		&\Delta\rho_{2}(\lambda;t)=\frac{1}{s_{22}^{2}}\big(s_{22}\Delta s_{12}-s_{12}\Delta s_{22}\big)=-\dfrac{\lambda}{s_{22}^{2}(\lambda;t)} \int_{-\infty}^{+\infty}\big\langle{\bf\Omega}_{-,2}(\lambda;y,t)\big|\sigma_{1}\big|\Delta{\bf w}(y,t)\big\rangle\dd y.
	\end{split}
\end{equation}
Equation \eqref{4-var-rho} can be regarded as a mapping from $\Delta q(x,t)$ to $\Delta\rho_{j}(\lambda;t)$. And then we wish to construct its inverse mapping by discussing two integrals. 

The first integral is
\begin{equation}\label{4-integral}
	{\bf E}_{n}(x,t)=\frac{1}{2\pi\ii}\int_{\mathbb{E}}\lambda^{n}\dd\lambda\int_{-\infty}^{\infty}{\bf G}(\lambda;x,y,t){\bf v}(y,t)\dd y,\ \ \ \ n=0,1,2,
\end{equation}
where $\mathbb{E}$ is a contour path enclosing the whole region of the
$\lambda$-plane, ${\bf v}(x,t)$ is an arbitrary smooth vector function, ${\bf G}(\lambda;x,y,t)$ is called the Green function which will be given bellow. As shown in Fig.\ref{fig:contour}, the path $\mathbb{E}$ can be divided into two half-circular paths $\Gamma_{\pm}$ or two contour paths $\Gamma_{\pm}+\Sigma_{\pm}$, i.e. $\mathbb{E}=\Gamma_{+}-\Gamma_{-}=(\Gamma_{+}+\Sigma_{+})-(\Gamma_{-}+\Sigma_{-})$. The Green function ${\bf G}(\lambda;x,y,t)$ is defined as follows:
\begin{equation*}
	{\bf G}_{x}(\lambda;x,y,t)-{\bf U}(\lambda;x,t){\bf G}(\lambda;x,y,t)=\delta(x-y)\mathbb{I},
\end{equation*}
where $\delta(z)$ is a Dirac's $\delta$-function, and we choose the following two kinds of Green functions, whose detailed construction procedure can be found in Appendix-A in \cite{kawata1980generalized},
\begin{equation*}\label{4-green function}
	\begin{split}
		&{\bf G}_{p}(\lambda;x,y,t)=
		\begin{cases}
			\ \ \big|\phi_{2}^{+}(\lambda;x,t)\big\rangle\dfrac{1}{s_{11}(\lambda;t)}\big\langle\phi_{1}^{-}(\lambda;y,t)\big|,\ \ \ \ y\leq x,\\[10pt]
			\ \ 	\big|\phi_{1}^{-}(\lambda;x,t)\big\rangle\dfrac{1}{s_{11}(\lambda;t)}\big\langle\phi_{2}^{+}(\lambda;y,t)\big|,\ \ \ \ y\geq x,
		\end{cases}\\
		&{\bf G}_{n}(\lambda;x,y,t)=
		\begin{cases}
			-\big|\phi_{1}^{+}(\lambda;x,t)\big\rangle\dfrac{1}{s_{22}(\lambda;t)}\big\langle\phi_{2}^{-}(\lambda;y,t)\big|,\ \ \ \ y\leq x,\\[10pt]
			-\big|\phi_{2}^{-}(\lambda;x,t)\big\rangle\dfrac{1}{s_{22}(\lambda;t)}\big\langle\phi_{1}^{+}(\lambda;y,t)\big|,\ \ \ \ y\geq x,
		\end{cases}
	\end{split}
\end{equation*}
where ${\bf G}_{p}(\lambda;x,y,t)$ and ${\bf G}_{n}(\lambda;x,y,t)$ are defined on $D^{\pm}$, respectively. By calculation, we get
\begin{equation}\label{4-integral-relation}
	{\bf E}_{0}(x,t)=0,\ \ \ \ {\bf E}_{1}(x,t)=-\ii\sigma_{3}{\bf v}(x,t),\ \ \ \ {\bf E}_{2}(x,t)={\bf Q}(x,t){\bf v}(x,t).
\end{equation}
And based on the relations \eqref{4-Phi-S} and \eqref{4-rho}, we can rewrite the integral \eqref{4-integral} as
\begin{multline}\label{4-reintegral}
	{\bf E}_{n}(x,t)=\frac{1}{2\pi\ii}\int_{-\infty}^{\infty}\bigg(\int_{\Gamma_{-}+\Sigma_{-}}\big|\phi_{1}^{+}(\lambda;x,t)\big\rangle\rho_{2}(\lambda;t)\big\langle\phi_{1}^{+}(\lambda;y,t)\big|\lambda^{n}\dd\lambda\\[4pt]
	+\int_{\Gamma_{+}+\Sigma_{+}}\big|\phi_{2}^{+}(\lambda;x,t)\big\rangle\rho_{1}(\lambda;t)\big\langle\phi_{2}^{+}(\lambda;y,t)\big|\lambda^{n}\dd\lambda-\int_{\Sigma_{+}}\big|\phi^{+}(\lambda;x,t)\big\rangle\rho(\lambda;t)\big\langle\phi^{+}(\lambda;y,t)\big|\lambda^{n}\dd\lambda\bigg){\bf v}(y,t)\dd y,	
\end{multline}
where
\begin{multline}\label{4-phi-rho}
	\big|\phi^{+}(\lambda;x,t)\big\rangle\rho(\lambda;t)\big\langle\phi^{+}(\lambda;y,t)\big|=\big|\phi_{1}^{+}(\lambda;x,t)\big\rangle\big\langle\phi_{2}^{+}(\lambda;y,t)\big|+\big|\phi_{2}^{+}(\lambda;x,t)\big\rangle\big\langle\phi_{1}^{+}(\lambda;y,t)\big|\\[3pt]
	+\big|\phi_{1}^{+}(\lambda;x,t)\big\rangle\rho_{2}(\lambda;t)\big\langle\phi_{1}^{+}(\lambda;y,t)\big|+\big|\phi_{2}^{+}(\lambda;x,t)\big\rangle\rho_{1}(\lambda;t)\big\langle\phi_{2}^{+}(\lambda;y,t)\big|.
\end{multline}
Combining with equations \eqref{4-integral-relation} and \eqref{4-reintegral}, we can get the completeness of the fundamental solutions
\begin{equation}\label{4-F-spec}
	{\bf F}_{0}(x,y,t)=0,\ \ \ \ {\bf F}_{1}(x,y,t)=\sigma_{3}\delta(x-y),\ \ \ \ {\bf F}_{2}(x,y,t)=\ii{\bf Q}(x,t)\delta(x-y),
\end{equation}
where
\begin{multline}\label{4-F}
	{\bf F}_{n}(x,y,t)=\frac{1}{2\pi}\bigg(\int_{\Gamma_{-}+\Sigma_{-}}\big|\phi_{1}^{+}(\lambda;x,t)\big\rangle\rho_{2}(\lambda;t)\big\langle\phi_{1}^{+}(\lambda;y,t)\big|\lambda^{n}\dd\lambda\\[4pt]
	+\int_{\Gamma_{+}+\Sigma_{+}}\big|\phi_{2}^{+}(\lambda;x,t)\big\rangle\rho_{1}(\lambda;t)\big\langle\phi_{2}^{+}(\lambda;y,t)\big|\lambda^{n}\dd\lambda
	-\int_{\Sigma_{+}}\big|\phi^{+}(\lambda;x,t)\big\rangle\rho(\lambda;t)\big\langle\phi^{+}(\lambda;y,t)\big|\lambda^{n}\dd\lambda\bigg).
\end{multline}

The second integral is
\begin{equation}\label{4-int}
	-\frac{1}{2\pi}\int_{\Sigma_{+}}\big|\widetilde{\phi}^{+}(\lambda;x,t)\big\rangle\widetilde{\rho}(\lambda;t)\big\langle\phi^{+}(\lambda;y,t)\big|\lambda^{n}\dd\lambda,\ \ \ \ n=0,1,
\end{equation}
where $\big|\widetilde{\phi}^{+}(\lambda;x,t)\big\rangle\widetilde{\rho}(\lambda;t)\big\langle\phi^{+}(\lambda;y,t)\big|$ is defined in the same way as \eqref{4-phi-rho}. To facilitate the discussion of the above integral, we need to introduce the integral representations of  $\Phi(\lambda;x,t)$ and $\widetilde{\Phi}(\lambda;x,t)$, which are the fundamental solutions of the Lax pairs corresponding to the potential functions ${\bf Q}(x,t)$ and $\widetilde{{\bf Q}}(x,t)={\bf Q}+\Delta{\bf Q}$, respectively. By combining these integral representations and their inverse forms, we can get 
\begin{equation}\label{4-jost-trig-1}
	\big|\widetilde{\phi}_{j}^{+}(\lambda;x,t)\big\rangle={\bf K}(x,t)\big|\phi_{j}^{+}(\lambda;x,t)\big\rangle-\int_{x}^{+\infty}\big({\bf L}_{d}(x,y,t)+\lambda{\bf L}_{o}(x,y,t)	\big)\big|\phi_{j}^{+}(\lambda;y,t)\big\rangle\dd y,\ \ j=1,2,
\end{equation}
\begin{equation}\label{4-jost-trig-2}
	\big\langle\phi_{j}^{+}(\lambda;x,t)\big|=\big\langle\widetilde{\phi}_{j}^{+}(\lambda;x,t)\big|{\bf \widetilde K}^{\rm A}(x,t)-\int_{x}^{+\infty}\big\langle\widetilde{\phi}_{j}^{+}(\lambda;y,t)\big|\big({\bf \widetilde L}_{d}^{\rm A}(x,y,t)+\lambda{\bf \widetilde L}_{o}^{\rm A}(x,y,t)\big)\dd y,\ \ j=1,2,
\end{equation}
where ${\bf K}$ and ${\bf L}_{d}$ are diagonal, ${\bf L}_{o}$ is off-diagonal, the superscript $^{\rm A}$ refers to the adjoint matrix, and 
\begin{equation}\label{4-K}
	{\bf K}(x,t)=\exp\left(\frac{\ii}{2}\int_{x}^{+\infty}\big(|\tilde{q}(y,t)|^{2}-|q(y,t)|^{2}\big)\dd y\sigma_{3}\right),\ \ 2{\bf L}_{o}(x,x,t)=\widetilde{\bf Q}(x,t){\bf K}(x,t)-{\bf K}(x,t){\bf Q}(x,t).
\end{equation}
For the case of ${\bf \widetilde Q}(x,t)$, we distinguish all related quantities by marking a ``tilde". Then we discuss the integral \eqref{4-int} when $n=1$ from two aspects, the first is to substitute \eqref{4-jost-trig-1} into \eqref{4-int} for simplification, the second is to substitute \eqref{4-jost-trig-2} into \eqref{4-int}, and then combining with \eqref{4-F-spec}, \eqref{4-F}, \eqref{4-jost-trig-1} for simplification. By comparing the integrals obtained by different simplification methods, we can get
\begin{multline}\label{4-KFW}
	\int_{x}^{+\infty}\Big({\bf L}_{d}(x,z,t)\big({\bf F}_{1}(z,y,t)+{\bf W}_{1}(z,y,t)\big)+{\bf L}_{o}(x,z,t)\big({\bf F}_{2}(z,y,t)+{\bf W}_{2}(z,y,t)\big)\Big)\dd z\\
	={\bf K}(x,t)\big({\bf F}_{1}(x,y,t)+{\bf W}_{1}(x,y,t)\big),\ \ \ \ x\neq y,
\end{multline}
where $\Delta\rho(\lambda;t)=\widetilde{\rho}(\lambda;t)-\rho(\lambda;t)$,
\begin{equation}\label{4-def-Wn}
	\begin{split}
	{\bf W}_{n}(x,y,t)=\frac{1}{2\pi}\Big(\int_{\Gamma_{+}}\big|\phi_{2}^{+}(\lambda;x,t)\big\rangle\Delta\rho_{1}(\lambda;t)&\big\langle\phi^{+}_{2}(\lambda;y,t)\big|\lambda^{n}\dd\lambda\\
	&+\int_{\Gamma_{-}}\big|\phi_{1}^{+}(\lambda;x,t)\big\rangle\Delta\rho_{2}(\lambda;t)\big\langle\phi^{+}_{1}(\lambda;y,t)\big|\lambda^{n}\dd\lambda\Big).
\end{split}
\end{equation}
According to the symmetry properties of $\Phi(\lambda;x,t)$ and ${\bf S}(\lambda;t)$ in proposition \ref{prop-sym}, we can find that ${\bf W}_{1}$ is diagonal, ${\bf W}_{0}$ and ${\bf W}_{2}$ are off-diagonal. Based on \eqref{4-F-spec}, the equation \eqref{4-KFW} can be rewritten as
\begin{multline*}
	{\bf K}(x,t){\bf W}_{1}(x,y,t)-\int_{x}^{+\infty}\Big({\bf L}_{d}(x,z,t){\bf W}_{1}(z,y,t)+{\bf L}_{o}(x,z,t){\bf W}_{2}(z,y,t)\Big)\dd z\\
	-{\bf L}_{d}(x,y,t)\sigma_{3}-\ii{\bf L}_{o}(x,y,t){\bf Q}(y,t)=0,\ \ \ \ x\neq y,
\end{multline*}
which is the generalized Gel'fand-Levitan (G-L) equation. Similarly, considering the integral \eqref{4-int} when $n=0$, we can deduce another generalized G-L equation:
\begin{equation*}
	{\bf K}(x,t){\bf W}_{0}(x,y,t)-\int_{x}^{+\infty}\Big({\bf L}_{d}(x,z,t){\bf W}_{0}(z,y,t)+{\bf L}_{o}(x,z,t){\bf W}_{1}(z,y,t)\Big)\dd z-{\bf L}_{o}(x,y,t)\sigma_{3}=0,\ \ x\neq y.
\end{equation*}

Based on the above preparations, we will construct the mapping from $\Delta\rho_{j}(\lambda;t)$ to $\Delta q(x,t)$. We use the notation $\delta\gamma$, a quantity related to $\gamma$, to represent $ \parallel\delta\gamma\parallel$ is sufficiently small. Note that for sufficiently small $\|\Delta\rho_{j}(\lambda;t)\|,\ j=1,2$, the generalized G-L equations can be reduced to the linear equations,
\begin{equation}\label{4-linear}
{\bf L}_{d}(x,y,t)\sigma_{3}={\bf K}(x,t){\bf W}_{1}(x,y,t)-\ii{\bf L}_{o}(x,y,t){\bf Q}(y,t),\ \ \ \ {\bf L}_{o}(x,y,t)\sigma_{3}={\bf K}(x,t){\bf W}_{0}(x,y,t).
\end{equation}
According to \eqref{4-K}, \eqref{4-linear} and the definition of ${\bf W}_{n}(x,y,t)$ (i.e.\eqref{4-def-Wn}), we can get two different expressions for $\begin{bmatrix}
	\big({\bf W}_{0}(x,x,t)\big)_{12},&
	\big({\bf W}_{0}(x,x,t)\big)_{21}
\end{bmatrix}^{\top}$. Comparing these two expressions, we can derive
\begin{equation}\label{4-map}
	\begin{split}
	\sigma_{1}\delta {\bf w}(x,t)=
	&-\frac{1}{\pi}\left(\int_{\Gamma_{+}}{\bf\Omega}_{+,2}(\lambda;x,t)\delta\rho_{1}(\lambda;t)\dd\lambda+\int_{\Gamma_{-}}{\bf\Omega}_{-,1}(\lambda;x,t)\delta\rho_{2}(\lambda;t)\dd\lambda\right)\\
	&+\ii\int_{x}^{+\infty}\big\langle {\bf w}(y,t)\big|\sigma_{3}\big|\delta{\bf w}(y,t)\big\rangle\dd y\ \sigma_{3}\sigma_{1}{\bf w}(x,t).
	\end{split}
\end{equation}
A direct calculation yields
\begin{equation}\label{4-relation}
	\begin{split}
		&\big\langle{\phi_{1}^{+}(\lambda;x,t)|\sigma_{3}|\phi_{1}^{+}(\lambda;x,t)}\big\rangle=-2\lambda\int_{x}^{+\infty}\big\langle{\bf w}(y,t)|\sigma_{1}\sigma_{3}|{\bf\Omega}_{-,1}(\lambda;y,t)\big\rangle\dd y,\\
		&\big\langle{\phi_{2}^{+}(\lambda;x,t)|\sigma_{3}|\phi_{2}^{+}(\lambda;x,t)}\big\rangle=-2\lambda\int_{x}^{+\infty}\big\langle{\bf w}(y,t)|\sigma_{1}\sigma_{3}|{\bf\Omega}_{+,2}(\lambda;y,t)\big\rangle\dd y.
	\end{split}
\end{equation}
Based on \eqref{4-relation}, the equation \eqref{4-map} can be rewritten as:
\begin{multline}\label{4-delta-alpha}
	\int_{x}^{+\infty}\big\langle {\bf w}(y,t)\big|\sigma_{3}\big|\delta{\bf w}(y,t)\big\rangle\dd y=\frac{1}{2\pi}\bigg(\int_{\Gamma_{+}}\frac{\delta\rho_{1}(\lambda;t)}{\lambda}
	\big\langle{\phi_{2}^{+}(\lambda;x,t)|\sigma_{3}|\phi_{2}^{+}(\lambda;x,t)}\big\rangle
	\dd\lambda\\
	+\int_{\Gamma_{-}}\frac{\delta\rho_{2}(\lambda;t)}{\lambda}\big\langle{\phi_{1}^{+}(\lambda;x,t)|\sigma_{3}|\phi_{1}^{+}(\lambda;x,t)}\big\rangle\dd\lambda\bigg).
\end{multline}
Substituting \eqref{4-Omega-prop} and \eqref{4-delta-alpha} into \eqref{4-map} gives
\begin{equation}\label{4-var-w}
	\delta{\bf w}(x,t)=-\dfrac{\sigma_{2}}{2\pi}\left(\int_{\Gamma_{+}}\dfrac{\delta\rho_{1}(\lambda;t)}{\lambda^{2}}\ \partial_{x}{\bf\Omega}_{+,2}(\lambda;x,t)\dd\lambda+\int_{\Gamma_{-}}\dfrac{\delta\rho_{2}(\lambda;t)}{\lambda^{2}}\ \partial_{x}{\bf\Omega}_{-,1}(\lambda
	;x,t)\dd\lambda\right).
\end{equation}
Based on the time evolution of ${\bf S}(\lambda;t)$, the time evolution of $\delta\rho_{j}(\lambda;t),\ j=1,2$ are found to be
\begin{equation}\label{4-evolution-deltarho}
	\delta\rho_{1}(\lambda;t)=\ee^{-4\ii\mathcal{F}_{fd}(\lambda^{2})t}\delta\rho_{j}(\lambda;0),\ \ \ \ \delta\rho_{2}(\lambda;t)=\ee^{4\ii\mathcal{F}_{fd}(\lambda^{2})t}\delta\rho_{2}(\lambda;0).
\end{equation} 
Combining with \eqref{4-evolution-deltarho} and substituting \eqref{4-var-rho} into \eqref{4-var-w},
\begin{multline}\label{4-map-w-rho}
	\delta{\bf w}(x,t)=-\dfrac{\sigma_{2}}{2\pi}\bigg(\int_{\Gamma_{+}}\dfrac{1}{\lambda s_{11}^{2}(\lambda;t)}\ \partial_{x}{\bf\Omega}_{+,2}(\lambda;x,t)\int_{-\infty}^{+\infty}\big\langle{\bf\Omega}_{+,1}(\lambda;y,t)\big|\sigma_{1}\big|\delta{\bf w}(y,t)\big\rangle\dd y\dd\lambda\\
	-\int_{\Gamma_{-}}\dfrac{1}{\lambda s_{22}^{2}(\lambda;t)}\  \partial_{x}{\bf\Omega}_{-,1}(\lambda
	;x,t)\int_{-\infty}^{+\infty}\big\langle{\bf\Omega}_{-,2}(\lambda;y,t)\big|\sigma_{1}\big|\delta{\bf w}(y,t)\big\rangle\dd y\dd\lambda\bigg).
\end{multline}
Furthermore, 
\begin{equation}\label{4-w-delta}
	\delta{\bf w}(x,t)=\int_{-\infty}^{+\infty}\delta(x-y)\delta{\bf w}(y,t)\dd y.
\end{equation}
Then we can get a completeness relation of squared eigenfunctions by comparing \eqref{4-map-w-rho} with \eqref{4-w-delta},
\begin{multline*}
	\sigma_{3}\delta(x-y)=\frac{1}{2\pi\ii}\bigg(\int_{\Gamma_{+}}\dfrac{1}{\lambda s_{11}^{2}(\lambda;t)}\left(\ \partial_{x}\big|{\bf\Omega}_{+,2}(\lambda;x,t)\big\rangle\right)\big\langle{\bf\Omega}_{+,1}(\lambda;y,t)\big|\dd\lambda\\
	-\int_{\Gamma_{-}}\dfrac{1}{\lambda s_{22}^{2}(\lambda;t)}\left(\ \partial_{x}\big|{\bf\Omega}_{-,1}(\lambda;x,t)\big\rangle\right)\big\langle{\bf\Omega}_{-,2}(\lambda;y,t)\big|\dd\lambda\bigg).
\end{multline*}
Based on the equation \eqref{4-map-w-rho}, we can assume a sufficiently smooth and decaying vector function $\vartheta(x,t)=\begin{bmatrix}
	\vartheta_{1},&\vartheta_{2}
\end{bmatrix}^{\top}$, which can also be expanded in terms of the eigenfunctions. We choose $\vartheta(x,t)=\begin{bmatrix}
q^{*},&q
\end{bmatrix}^{\top}$, and let $\mathcal{F}_{fd}(\mathcal{L})$ act on it, then there is
\begin{multline}\label{4-Fv}
	\mathcal{F}_{fd}(\mathcal{L})\begin{bmatrix}
		q^{*}\\
		q
	\end{bmatrix}=-\frac{\sigma_{2}}{2\pi}\bigg(\int_{\Gamma_{+}}\dfrac{\lambda^{3}|2\lambda^{2}|^{\epsilon}}{s_{11}^{2}(\lambda;t)}\begin{bmatrix}
		(\phi_{12}^{+})^2\\[3pt]	(\phi_{22}^{+})^2
	\end{bmatrix}_{x}
	\int_{-\infty}^{+\infty}\Big(\big(\phi_{11}^{-}\big)^{2}q^{*}-\big(\phi_{21}^{-}\big)^{2}q\Big)\dd y\dd\lambda\\[4pt]
	-\int_{\Gamma_{-}}\dfrac{\lambda^{3}|2\lambda^{2}|^{\epsilon}}{s_{22}^{2}(\lambda;t)}\begin{bmatrix}
		(\phi_{11}^{+})^2\\[3pt]	(\phi_{21}^{+})^2
	\end{bmatrix}_{x}
	\int_{-\infty}^{+\infty}\Big(\big(\phi_{12}^{-}\big)^{2}q^{*}-\big(\phi_{22}^{-}\big)^{2}q\Big)\dd y\dd\lambda\bigg),
\end{multline}
in the integral terms of \eqref{4-Fv}, $q{=}q(y,t),\ q^{*}{=}q^{*}(y,t),\ \phi_{jk}^{+}{=}\phi_{jk}^{+}(\lambda;x,t),\  \phi_{jk}^{-}{=}\phi_{jk}^{-}(\lambda;y,t)$,\ $j,k{=}1,2.$ Then the explicit form of the fDNLS equation can be given by combining with the equation \eqref{4-evolution eq},
	\begin{equation}\label{4-fDNLS-new}
	\begin{split}
		q_{t}(x,t)=&-\frac{2}{\pi}\bigg(\int_{\Gamma_{+}}\dfrac{\lambda^{3}|2\lambda^{2}|^{\epsilon}}{ s_{11}^{2}(\lambda;t)}\partial_{x}(\phi_{12}^{+}(\lambda;x,t))^2
		\int_{-\infty}^{+\infty}\Big(\big(\phi_{11}^{-}(\lambda;y,t)\big)^{2}q^{*}(y,t)-\big(\phi_{21}^{-}(\lambda;y,t)\big)^{2}q(y,t)\Big)\dd y\dd\lambda\\
		&-\int_{\Gamma_{-}}\dfrac{\lambda^{3}|2\lambda^{2}|^{\epsilon}}{ s_{22}^{2}(\lambda;t)}\partial_{x}(\phi_{11}^{+}(\lambda;x,t))^2
		\int_{-\infty}^{+\infty}\Big(\big(\phi_{12}^{-}(\lambda;y,t)\big)^{2}q^{*}(y,t)-\big(\phi_{22}^{-}(\lambda;y,t)\big)^{2}q(y,t)\Big)\dd y\dd\lambda\bigg),
	\end{split}
\end{equation}
in the integral of the above equation, $q=q(y,t),\ q^{*}=q^{*}(y,t),\ \phi_{1k}^{+}=\phi_{1k}^{+}(\lambda;x,t),\  \phi_{jk}^{-}=\phi_{jk}^{-}(\lambda;y,t),\ j,k=1,2.$ 
In particular, the equation \eqref{4-fDNLS-new} will degenerate into the classical DNLS equation when $\epsilon=0$.

\section{Fractional $N$-soliton solution}
In this section, we want to explore the fractional $N$-soliton solution of the fDNLS equation, which leads us to start with the case of reflectionless potential: $\rho_{1}(\lambda;t)=\rho_{2}(\lambda;t)=0$. Then there is ${\bf J}(\lambda;x,t)=0$, so 
\begin{equation*}
	s_{11}(\lambda;t){=}\exp\left({-}\frac{\ii}{2}\int_{-\infty}^{+\infty}|q(y,t)|^{2}dy\right)\prod_{k=1}^{2N}\dfrac{\lambda-\lambda_{k}}{\lambda-\lambda_{k}^{*}},\ \ 
	s_{22}(\lambda;t){=}\exp\left(\frac{\ii}{2}\int_{-\infty}^{+\infty}|q(y,t)|^{2}dy\right)\prod_{k=1}^{2N}\dfrac{\lambda-\lambda_{k}^{*}}{\lambda-\lambda_{k}},
\end{equation*}
and
\begin{equation}\label{4-q-reflectionless}
	q(x,t)=2\ii\exp\left(\frac{\ii}{2}\int_{x}^{+\infty}|q(y,t)|^{2}dy\right)\sum_{k=1}^{2N}d_{k}\psi_{11}^{+}(\lambda_{k}^{*};x,t)\exp\left(-2\ii\big(\lambda_{k}^{*2}x+2\mathcal{F}_{fd}(\lambda_{k}^{*2})t\big)\right).
\end{equation}
Since the equation \eqref{4-q-reflectionless} contains the unknown function $\psi_{11}^{+}(\lambda_{k}^{*};x,t)$, so we will look for the expression for this function. Based on the definitions of ${\bf M}_{\pm}(\lambda;x,t)$ (i.e.\eqref{4-M-def}) and their explicit forms (i.e.\eqref{4-M-explicit-infity}), we can derive
\begin{equation}\label{4-psi_11}
	\psi_{11}^{+}(\lambda;x,t)=\exp\left(\frac{\ii}{2}\int_{x}^{+\infty}|q(y,t)|^{2}dy\right)+\sum_{k=1}^{2N}\dfrac{c_{k}}{\lambda-\lambda_{k}}\exp\left(2\ii\big(\lambda_{k}^{2}x+2\mathcal{F}_{fd}(\lambda_{k}^{2})t\big)\right)\psi_{12}^{+}(\lambda_{k};x,t),
\end{equation}
\begin{equation}\label{4-psi_12}
	\psi_{12}^{+}(\lambda;x,t)=\sum_{k=1}^{2N}\dfrac{d_{k}}{\lambda-\lambda_{k}^{*}}\exp\left(-2\ii\big(\lambda_{k}^{*2}x+2\mathcal{F}_{fd}(\lambda_{k}^{*2})t\big)\right)\psi_{11}^{+}(\lambda_{k}^{*};x,t).
\end{equation}
Taking $\lambda=\lambda_{k}$ in \eqref{4-psi_12}, and substituting it into \eqref{4-psi_11}. Then taking $\lambda=\lambda_{k}^{*}$,
\begin{multline*}
	\psi_{11}^{+}(\lambda_{k}^{*};x,t)=\exp\left(\frac{\ii}{2}\int_{x}^{+\infty}|q(y,t)|^{2}dy\right)-\sum_{j=1}^{2N}\sum_{l=1}^{2N}\bigg(\dfrac{c_{l}d_{j}}{(\lambda_{l}-\lambda_{k}^{*})(\lambda_{l}-\lambda_{j}^{*})}\psi_{11}^{+}(\lambda_{j}^{*};x,t)\\
	\times\exp\Big(2\ii\big((\lambda_{l}^{2}-\lambda_{j}^{*2})x+2\big(\mathcal{F}_{fd}(\lambda_{l}^{2})-\mathcal{F}_{fd}(\lambda_{j}^{*2})\big)t\big)\Big)\bigg).
\end{multline*}
Using the method as in \cite{zhang2020derivative}, the solution to the above equation can be expressed as follows:
\begin{equation}\label{4-psi11-infty}
	\psi_{11}^{+}(\lambda_{k}^{*};x,t)=\exp\left(\frac{\ii}{2}\int_{x}^{+\infty}|q(y,t)|^{2}dy\right)\dfrac{\det\widetilde{{\bf R}}_{k}}{\det{\bf R}},
\end{equation}
where ${\bf R}=\mathbb{I}+\sum\limits_{l=1}^{2N}{\bf R}_{0,l}$, the element at the position $(j,k),\ j,k=1,\cdots,2N$ of the matrix ${\bf R}_{0,l}$ is 
\begin{equation*}
	({\bf R}_{0,l})_{j,k}=\dfrac{c_{l}d_{k}}{(\lambda_{l}-\lambda_{j}^{*})(\lambda_{l}-\lambda_{k}^{*})}\exp\Big(2\ii\big((\lambda_{l}^{2}-\lambda_{k}^{*2})x+2\big(\mathcal{F}_{fd}(\lambda_{l}^{2})-\mathcal{F}_{fd}(\lambda_{k}^{*2})\big)t\big)\Big),
\end{equation*}
$\widetilde{{\bf R}}_{k}$ is replacing the $k$-th column of the matrix ${\bf R}$ with the column vector ${\bf e}=[\ \underbrace{1,1,\cdots,1}_{2N}\ ]^{\top}$. Based on the equation \eqref{4-psi11-infty}, the function $q(x,t)$ in \eqref{4-q-reflectionless} becomes
\begin{equation*}
	q(x,t)=2\ii\exp\left(\ii\int_{x}^{+\infty}|q(y,t)|^{2}dy\right)\sum_{k=1}^{2N}d_{k}\exp\left(-2\ii\big(\lambda_{k}^{*2}x+2\mathcal{F}_{fd}(\lambda_{k}^{*2})t\big)\right)\dfrac{\det\widetilde{{\bf R}}_{k}}{\det{\bf R}}.
\end{equation*}
We denote $d_{k}\exp\left(-2\ii\big(\lambda_{k}^{*2}x+2\mathcal{F}_{fd}(\lambda_{k}^{*2})t\big)\right)=\widetilde{d}_{k}$, then the above equation can be rewritten as 
\begin{equation}\label{4-q-lambda-infty}
	q(x,t)=-2\ii\exp\left(\ii\int_{x}^{+\infty}|q(y,t)|^{2}dy\right)\dfrac{\det{\bf R}^{{\rm e}}}{\det{\bf R}},\ \ \ \ \ \ 
	{\bf R}^{{\rm e}}=\begin{bmatrix}
		0&\widetilde{{\bf d}}\\
		{\bf e}&{\bf R}
	\end{bmatrix},\ \ \ \ \widetilde{{\bf d}}=\begin{bmatrix}
		\widetilde{d}_{1},&\widetilde{d}_{2},&\cdots,&\widetilde{d}_{2N}
	\end{bmatrix}.
\end{equation}
Obviously, the equation \eqref{4-q-lambda-infty} is an implicit one, and we need further analysis to get an explicit form of $q(x,t)$. Note that the above analyses were discussed as $\lambda\to\infty$, and we can also consider the expansion of $\lambda\to 0$ by using the same method. Combining with the idea in \cite{zhang2020derivative}, we can get another representation of ${\bf M}(\lambda;x,t)$,
\begin{equation}\label{4-M-explicit-0}
	\begin{split}
		&{\bf M}(\lambda;x,t)=\mathbb{I}
		+\lambda\sum_{k=1}^{2N}\left(\dfrac{\underset{\lambda=\lambda_{k}}{\rm{Res}}\frac{{\bf M}(\lambda;x,t)}{\lambda}}{\lambda-\lambda_{k}}+\dfrac{\underset{\lambda=\lambda_{k}^{*}}{\rm{Res}}\frac{{\bf M}(\lambda;x,t)}{\lambda}}{\lambda-\lambda_{k}^{*}}\right)\\[4pt]
		&=\mathbb{I}
		{+}\lambda\sum_{k=1}^{2N}\left[\dfrac{c_{k} \psi_{2}^{+}(\lambda_{k};x,t) }{\lambda_{k}(\lambda-\lambda_{k})} \exp\left(2\ii\left(\lambda_{k}^{2}x{+}2\mathcal{F}_{fd}(\lambda_{k}^{2})t\right)\right) ,\ \dfrac{d_{k}\psi_{1}^{+}(\lambda_{k}^{*};x,t)}{\lambda_{k}^{*}(\lambda-\lambda_{k}^{*})} \exp\left({-}2\ii\left(\lambda_{k}^{*2}x{+}2\mathcal{F}_{fd}(\lambda_{k}^{*2})t\right)\right)\right].
	\end{split}
\end{equation}
Now we based on the definitions of ${\bf M}_{\pm}(\lambda;x,t)$ (i.e.\eqref{4-M-def}) and their explicit forms (i.e.\eqref{4-M-explicit-0}) to reconsider the explicit form of $\psi_{11}^{+}(\lambda_{k}^{*};x,t)$. Similarly, we can obtain another equation related to $\psi_{11}^{+}(\lambda_{k}^{*};x,t)$
\begin{multline*}
	\psi_{11}^{+}(\lambda_{k}^{*};x,t)=1-\lambda_{k}^{*}\sum_{j=1}^{2N}\sum_{l=1}^{2N}\bigg(\dfrac{c_{l}d_{j}}{(\lambda_{l}-\lambda_{k}^{*})(\lambda_{l}-\lambda_{j}^{*})\lambda_{j}^{*}}\psi_{11}^{+}(\lambda_{j}^{*};x,t)\\
	\times\exp\Big(2\ii\big((\lambda_{l}^{2}-\lambda_{j}^{*2})x+2\big(\mathcal{F}_{fd}(\lambda_{l}^{2})-\mathcal{F}_{fd}(\lambda_{j}^{*2})\big)t\big)\Big)\bigg),
\end{multline*} 
which can be solved explicitly by
\begin{equation}\label{4-psi11-0}
	\psi_{11}^{+}(\lambda_{k}^{*};x,t)=\dfrac{\det\widetilde{{\bf T}}_{k}}{\det{\bf T}},
\end{equation}
where ${\bf T}=\mathbb{I}+\sum\limits_{l=1}^{2N}{\bf T}_{0,l}$, the $(j,k)$-element of the matrix ${\bf T}_{0,l}$ is given by
\begin{equation*}
	({\bf T}_{0,l})_{j,k}=\dfrac{c_{l}d_{k}\lambda_{j}^{*}}{(\lambda_{l}-\lambda_{j}^{*})(\lambda_{l}-\lambda_{k}^{*})\lambda_{k}^{*}}\exp\Big(2\ii\big((\lambda_{l}^{2}-\lambda_{k}^{*2})x+2\left(\mathcal{F}_{fd}(\lambda_{l}^{2})-\mathcal{F}_{fd}(\lambda_{k}^{*2})\right)t\big)\Big),
\end{equation*}
and $\widetilde{{\bf T}}_{k}$ is the matrix ${\bf T}$ by replacing the $k$-th column with the column vector ${\bf e}$. Substituting \eqref{4-psi11-0} into \eqref{4-q-reflectionless},
	\begin{equation*}
		q(x,t)=2\ii\exp\left(\frac{\ii}{2}\int_{x}^{+\infty}|q(y,t)|^{2}dy\right)\sum_{k=1}^{2N}d_{k}\exp\left(-2\ii\big(\lambda_{k}^{*2}x+2\mathcal{F}_{fd}(\lambda_{k}^{*2})t\big)\right)\dfrac{\det\widetilde{{\bf T}}_{k}}{\det{\bf T}}.
	\end{equation*}
	Similarly, we denote $d_{k}\exp\left(-2\ii\big(\lambda_{k}^{*2}x+2\mathcal{F}_{fd}(\lambda_{k}^{*2})t\big)\right)=\widetilde{d}_{k}$, then the above equation can be rewritten as
\begin{equation}\label{4-q-lambda-0}
	q(x,t)=-2\ii\exp\left(\frac{\ii}{2}\int_{x}^{+\infty}|q(y,t)|^{2}dy\right)\dfrac{\det{\bf T}^{{\rm e}}}{\det{\bf T}},\ \ \ \ {\bf T}^{{\rm e}}=\begin{bmatrix}
		0&\widetilde{{\bf d}}\\
		{\bf e}&{\bf T}
	\end{bmatrix}.
\end{equation}
Comparing \eqref{4-q-lambda-infty} with \eqref{4-q-lambda-0}, there is
\begin{equation}\label{4-exp}
	\exp\left(\frac{\ii}{2}\int_{x}^{+\infty}|q(y,t)|^{2}dy\right)=\dfrac{\det{\bf T}^{{\rm e}}\det{\bf R}}{\det{\bf R}^{{\rm e}}\det{\bf T}}.
\end{equation}
Substituting \eqref{4-exp} into the equation \eqref{4-q-lambda-infty} or \eqref{4-q-lambda-0}, then the explicit form of the fractional $N$-soliton solution can be derived
\begin{equation}\label{4-q-soliton}
	q(x,t)=-2\ii\left(\dfrac{\det{\bf T}^{{\rm e}}}{\det{\bf T}}\right)^{2}\dfrac{\det{\bf R}}{\det{\bf R}^{{\rm e}}}.
\end{equation}

When $N=1$, we choose the spectral parameter $\lambda_{1}=\xi+\ii\eta$, which implies $\lambda_{2}=-\xi-\ii\eta.$ Then the fractional one-soliton solution can be obtained according to the equation \eqref{4-q-soliton}, which is summarized in the following proposition. And we will prove this fractional one-soliton solution solves the fDNLS equation.
\begin{proposition}
	The expression of the fractional one-soliton solution of the fDNLS equation \eqref{4-fDNLS-new} is as follows:
	\begin{equation}\label{4-q1}
		q^{[1]}(x,t)=-2\ii c_{1}\exp\left(\omega_{1}+\theta^{*}-2\theta\right)\cosh\left(\omega_{2}+\theta^{*}\right){\rm sech}^{2}\left(\omega_{2}+\theta\right),
	\end{equation}
	where $\ee^{\theta}=\dfrac{c_{1}(\xi-\ii\eta)}{2\xi\eta}$,
	\begin{equation*}
		\begin{split}
			&	\omega_{1}(x,t)=-2\ii\left((\xi^{2}-\eta^{2})x+2^{1+\epsilon}(\xi^{4}+\eta^{4}-6\xi^{2}\eta^{2})(\xi^{2}+\eta^{2})^{\epsilon}t\right),\\
			&\omega_{2}(x,t)=-4\xi\eta\left(x+2^{2+\epsilon}(\xi^{2}-\eta^{2})(\xi^{2}+\eta^{2})^{\epsilon}t\right).
		\end{split}
	\end{equation*}
\end{proposition}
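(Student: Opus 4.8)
The plan is to establish the formula \eqref{4-q1} in two stages: first derive the closed form by specializing the determinant representation \eqref{4-q-soliton} to $N=1$, and then verify directly that the resulting $q^{[1]}$ solves the nonlocal equation \eqref{4-fDNLS-new}. The entire computation is driven by one structural fact: since $\lambda_{2}=-\lambda_{1}$, we have $\lambda_{1}^{2}=\lambda_{2}^{2}$ and $\lambda_{1}^{*2}=\lambda_{2}^{*2}$, so every exponential $\exp\big(2\ii(\lambda_{l}^{2}-\lambda_{k}^{*2})x+\cdots\big)$ occurring in ${\bf R}_{0,l}$ and ${\bf T}_{0,l}$ collapses to a single common factor, and $\mathcal{F}_{fd}(\lambda_{k}^{2})$, $\mathcal{F}_{fd}(\lambda_{k}^{*2})$ become independent of $k$.

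For the first stage I would record the reductions $c_{2}=c_{1}$, $d_{1}=d_{2}=-c_{1}^{*}$, and hence $\widetilde{d}_{1}=\widetilde{d}_{2}$, which follow from the symmetry relations $c_{k+N}=c_{k}$ and $d_{k}=-c_{k}^{*}$. Factoring out the common exponential, both ${\bf R}$ and ${\bf T}$ become $2\times2$ symmetric matrices built from the four elementary blocks $1/\big((\lambda_{l}-\lambda_{j}^{*})(\lambda_{l}-\lambda_{k}^{*})\big)$ evaluated at the explicit nodes $\lambda_{1}-\lambda_{1}^{*}=2\ii\eta$, $\lambda_{1}-\lambda_{2}^{*}=2\xi$, and their analogues; the extra factor $\lambda_{j}^{*}/\lambda_{k}^{*}$ in ${\bf T}_{0,l}$ merely flips the sign of the off-diagonal entry because $\lambda_{2}^{*}=-\lambda_{1}^{*}$. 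One then finds $\det{\bf R}=\det{\bf T}$ and, by a single cofactor expansion of the bordered matrices, that $\det{\bf R}^{{\rm e}}$ and $\det{\bf T}^{{\rm e}}$ each factor through one of the two linear factors of $\det{\bf R}$. Substituting into \eqref{4-q-soliton} produces a cancellation that leaves a rational function of a single exponential; setting $\ee^{\theta}=c_{1}(\xi-\ii\eta)/(2\xi\eta)$ and splitting $-2\ii\lambda_{1}^{*2}x-4\ii\mathcal{F}_{fd}(\lambda_{1}^{*2})t$ into its imaginary part $\omega_{1}$ and real part $\omega_{2}$ — using $\lambda_{1}^{2}=(\xi^{2}-\eta^{2})+2\ii\xi\eta$, so that $\lambda_{1}^{4}$ supplies the combination $\xi^{4}+\eta^{4}-6\xi^{2}\eta^{2}$, together with $|2\lambda_{1}^{2}|^{\epsilon}=2^{\epsilon}(\xi^{2}+\eta^{2})^{\epsilon}$ — recasts the quotient as the stated $\cosh(\omega_{2}+\theta^{*}){\rm sech}^{2}(\omega_{2}+\theta)$ expression.

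For the second stage I would substitute the explicit reflectionless Jost functions, which are rational in $\lambda$ with the soliton exponentials as coefficients and are obtained from \eqref{4-psi_11}--\eqref{4-psi_12} together with the symmetries of proposition~\ref{prop-sym}, into the right-hand side of \eqref{4-fDNLS-new}. The inner $y$-integrals $\int_{-\infty}^{+\infty}\big((\phi_{11}^{-})^{2}q^{*}-(\phi_{21}^{-})^{2}q\big)\dd y$ are then evaluated in closed form from the first-order differential relations \eqref{4-Omega-prop} and \eqref{4-relation} for the squared eigenfunctions, reducing each to an algebraic expression in $\lambda$. In the reflectionless case the surviving $\lambda$-integrands are meromorphic in $D^{\pm}$ apart from the fractional weight, with poles only at $\lambda_{1},\lambda_{2}$ and their conjugates, so each $\Gamma_{\pm}$ integral is controlled by the discrete spectrum, where $\mathcal{F}_{fd}(\lambda_{k}^{2})=\lambda_{k}^{4}|2\lambda_{k}^{2}|^{\epsilon}$ enters as a constant. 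Comparing the resulting finite sum with $\partial_{t}q^{[1]}$, obtained by differentiating \eqref{4-q1} through $\partial_{t}\omega_{1}$ and $\partial_{t}\omega_{2}$, should close the verification; this second route has the advantage of proving that $q^{[1]}$ is a genuine solution independently of how the norming constant $c_{1}$ is parametrized in the determinant formula.

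I expect the main obstacle to be this last reduction. Because $|2\lambda^{2}|^{\epsilon}=2^{\epsilon}|\lambda|^{2\epsilon}$ is not holomorphic, the arc integrals over $\Gamma_{\pm}$ cannot be collapsed by a bare residue theorem, and one must argue carefully that for the reflectionless soliton the contributions localize at the discrete eigenvalues with exactly the weights $\mathcal{F}_{fd}(\lambda_{k}^{2})$ prescribed by the time evolution $a_{k}(t)=\ee^{-4\ii\mathcal{F}_{fd}(\lambda_{k}^{2})t}a_{k}(0)$. A secondary, purely technical difficulty is tracking the phase of the complex constant $c_{1}$ through the $\cosh/{\rm sech}^{2}$ rewriting, so that it is absorbed consistently into the definition $\ee^{\theta}=c_{1}(\xi-\ii\eta)/(2\xi\eta)$ and the prefactor; the algebra simplifies cleanly only once this phase is correctly accounted for.
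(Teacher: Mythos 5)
Your overall architecture matches the paper's: the closed form is read off from the $N=1$ reduction of \eqref{4-q-soliton} (the paper obtains the eigenfunctions via a one-fold Darboux transformation rather than from \eqref{4-psi_11}--\eqref{4-psi_12}, but that is cosmetic), and the substance of the proof is the direct verification that $q^{[1]}$ satisfies \eqref{4-fDNLS-new}. However, your second stage has a genuine gap precisely at the point you flag as "the main obstacle," and the obstacle is not overcome by the mechanism you propose. You assert that the inner $y$-integrals reduce via \eqref{4-Omega-prop} and \eqref{4-relation} to "an algebraic expression in $\lambda$," after which the $\Gamma_{\pm}$ integrals are "controlled by the discrete spectrum." If the inner integral were a generic nonzero rational function of $\lambda$, this localization would be false: the integrand carries the non-holomorphic weight $|2\lambda^{2}|^{\epsilon}$, so $\int_{\Gamma_{+}}$ cannot be collapsed onto residues at $\lambda_{1},\lambda_{2}$, and there is no time-evolution argument that substitutes for this. (Note also that \eqref{4-relation} produces $\int_{x}^{+\infty}$ integrals paired with ${\bf w}$, not the full-line integrals $\int_{-\infty}^{+\infty}\big((\phi_{11}^{-})^{2}q^{*}-(\phi_{21}^{-})^{2}q\big)\dd y$ appearing in \eqref{4-fDNLS-new}, so the claimed reduction does not follow from those identities as stated.)

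The missing key idea is a vanishing lemma: for the one-soliton potential, the inner integral $\int_{-\infty}^{+\infty}g_{2}(\lambda;y,t)\dd y$ is \emph{identically zero} for all $\lambda$ away from the discrete eigenvalues. The paper proves this by passing to the variable $z=\omega_{2}+\theta_{R}$ and integrating $\widehat{g}_{2}$ around a rectangle of height $\pi$ in the $z$-plane: the side contributions vanish, the top edge reproduces the bottom edge up to the factor $\exp\big(-(\lambda^{2}-\xi^{2}+\eta^{2})\pi/(2\xi\eta)\big)$, and the two residues at $z=\tfrac{\ii\pi}{2}\pm\ii\theta_{I}$ cancel, forcing $\int_{C_{B}}\widehat{g}_{2}\,\dd z=0$. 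This single fact is what rescues the computation: it kills the $\Sigma_{+}$ contribution in \eqref{4-step1-cd}, annihilates the $s_{11}^{''}$ term of the double-pole residues, and reduces each residue to $g_{1}(\lambda_{j})\int_{-\infty}^{+\infty}\partial_{\lambda}g_{2}(\lambda_{j};y,t)\dd y$, so the fractional weight $|2\lambda^{2}|^{\epsilon}$ is only ever evaluated at the discrete points $\lambda_{j}$, where it equals the constant $2^{\epsilon}(\xi^{2}+\eta^{2})^{\epsilon}$. Without identifying and proving this vanishing, your plan stalls exactly where you predicted it would; with it, the remainder is the explicit computation of $g^{'}(\lambda_{1};x,t)$, its $\Gamma_{-}$ counterpart, and the comparison with $q^{[1]}_{t}$ as in \eqref{4-int-gamma+}--\eqref{4-q1-t}.
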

\begin{proof}
	For convenience, we assume $\theta=\theta_{R}+\ii\theta_{I}$. By observing the equation \eqref{4-fDNLS-new}, it can be found that we first need to give the functions $\phi^{\pm}_{jk}(\lambda;x,t),\ j,k=1,2$, which correspond to the solution $q^{[1]}(x,t)$. After careful consideration, we believe that it is most convenient to find the expression
	of $\phi^{\pm}_{jk}(\lambda;x,t)$ by using the Darboux transform method. By modifying the form of the Darboux matrix in \cite{guo2013high}  properly, we obtain the one-fold Darboux matrix corresponds to the Lax pair of \eqref{4-fDNLS-new}
	\begin{equation*}
		{\bf D}=\mathbb{I}-\lambda\lambda_{1}^{*}\left(\dfrac{{\bf A}}{\lambda-\lambda_{1}^{*}}+\dfrac{\sigma_{3}{\bf A}\sigma_{3}}{\lambda+\lambda_{1}^{*}}\right),
	\end{equation*}
	where
	\begin{equation*}
		\begin{split}
			&{\bf A}=\dfrac{\lambda_{1}^{2}-\lambda_{1}^{*2}}{2|\lambda_{1}|^{2}}\begin{bmatrix}
				\alpha_{1}&0\\[3pt]
				0&\alpha_{2}
			\end{bmatrix}\varphi_{1}\varphi_{1}^{\dagger},\ \ \ \ \alpha_{1}^{-1}=\varphi_{1}^{\dagger}\begin{bmatrix}
				\lambda_{1}&0\\[3pt]
				0&\lambda_{1}^{*}
			\end{bmatrix}\varphi_{1},\ \ \ \ \alpha_{2}^{-1}=\varphi_{1}^{\dagger}\begin{bmatrix}
				\lambda_{1}^{*}&0\\[3pt]
				0&\lambda_{1}
			\end{bmatrix}\varphi_{1},\\[5pt]
			&\varphi_{1}=\begin{bmatrix}
				-\ii\exp\left(-\ii\lambda_{1}^{2}\big(x+2\lambda_{1}^{2}|2\lambda_{1}^{2}|^{\epsilon}t\big)-\frac{\theta_{R}+3\ii\theta_{I}}{2}\right)\\[5pt]
				\exp\left(\ii\lambda_{1}^{2}\big(x+2\lambda_{1}^{2}|2\lambda_{1}^{2}|^{\epsilon}t\big)+\frac{\theta_{R}+3\ii\theta_{I}}{2}\right)
			\end{bmatrix},
		\end{split}
	\end{equation*}
	the superscript $^\dagger$ denotes the complex conjugation and
	vector transpose. 
	And the transformation between the potential function $q(x,t)$ and the new potential function $q^{[1]}(x,t)$ is
	\begin{equation}\label{4-q-soliton-DT}
		q^{[1]}=q+({\bf A}-\sigma_{3}{\bf A}\sigma_{3})_{12,x}.
	\end{equation}
	Note that the fractional one-soliton solution $q^{[1]}(x,t)$ obtained by Darboux transform method (i.e.\eqref{4-q-soliton-DT} is consistent with the solution obtained by IST (i.e.\eqref{4-q1}). Then the new eigenfunctions $\Phi^{\pm[1]}(\lambda;x,t):=\widehat{\Phi}^{\pm}(\lambda;x,t)$ which correspond to the solution $q^{[1]}(x,t)$ can be derived by applying the asymptotic behavior \eqref{4-Phi-asy-be} and the relation $\widehat{\Phi}^{\pm}={\bf D}\Phi^{\pm}_{0}$,
	\begin{equation}\label{4-q1-Phi}
		\begin{split}
			&\widehat{\Phi}=\dfrac{(\xi{-}\ii\eta)}{(\xi+\ii\eta)(\lambda+\xi-\ii\eta)(\lambda-\xi+\ii\eta)}\\[8pt]
			&\ \ \ \times
			\begin{bmatrix}
				\left(\lambda^{2}\cosh(\omega_{2}{+}\theta^{*}){\rm sech}(\omega_{2}{+}\theta){-}(\xi^{2}{+}\eta^{2})\right)\ee^{\delta}&c_{1}\lambda\exp(\omega_{1}{-}\theta_{R}{-}3\ii\theta_{I}){\rm sech}(\omega_{2}{+}\theta)\ee^{{-}\delta}\\[10pt]
				c_{1}\lambda\exp({-}\omega_{1}{-}\theta_{R}{+}3\ii\theta_{I}){\rm sech}(\omega_{2}{+}\theta^{*})\ee^{\delta}&\left(\lambda^{2}\cosh(\omega_{2}{+}\theta){\rm sech}(\omega_{2}{+}\theta^{*}){-}(\xi^{2}{+}\eta^{2})\right)\ee^{{-}\delta}
			\end{bmatrix},\\[4pt]
			&\widehat{\Phi}^{+}{=}\widehat{\Phi}\begin{bmatrix}
				\dfrac{(\xi+\ii\eta)^{2}(\lambda+\xi-\ii\eta)(\lambda-\xi+\ii\eta)}{(\xi-\ii\eta)^{2}(\lambda-\xi-\ii\eta)(\lambda+\xi+\ii\eta)}&0\\[10pt]
				0&1
			\end{bmatrix},\  \widehat{\Phi}^{-}{=}\widehat{\Phi}\begin{bmatrix}
				1&0\\[10pt]
				0&\dfrac{(\xi+\ii\eta)^{2}(\lambda+\xi-\ii\eta)(\lambda-\xi+\ii\eta)}{(\xi-\ii\eta)^{2}(\lambda-\xi-\ii\eta)(\lambda+\xi+\ii\eta)}
			\end{bmatrix},
		\end{split}
	\end{equation}
	where $\Phi_{0}(\lambda;x,t)={\rm diag}(\ee^{\delta},\ee^{-\delta})$, $\delta=-\ii\lambda^{2}(x+2\lambda^{2}|2\lambda^{2}|^{\epsilon}t)$.
	
	Next, we will prove that $q^{[1]}(x,t)$ (i.e.\eqref{4-q1}) is a solution of the equation \eqref{4-fDNLS-new}. Firstly, we want to calculate the integral part of the equation \eqref{4-fDNLS-new} on $\Gamma_{+}$, i.e.,
	\begin{equation}\label{4-step1}
		\begin{split}
		\int_{\Gamma_{+}}\dfrac{\lambda^{3}|2\lambda^{2}|^{\epsilon}}{ s_{11}^{2}(\lambda;t)}\partial_{x}\left(\widehat{\phi}_{12}^{+}(\lambda;x,t)\right)^2
		\int_{-\infty}^{+\infty}\left(\left(\widehat{\phi}_{11}^{-}(\lambda;y,t)\right)^{2}\left(q^{[1]}(y,t)\right)^{*}-\left(\widehat{\phi}_{21}^{-}(\lambda;y,t)\right)^{2}q^{[1]}(y,t)\right)\dd y\dd\lambda&\\:=\int_{\Gamma_{+}}\dfrac{g(\lambda;x,t)}{s_{11}^{2}(\lambda;t)}\dd\lambda&,
		\end{split}
	\end{equation}
	where $g(\lambda;x,t)=g_{1}(\lambda;x,t)\int_{-\infty}^{+\infty}g_{2}(\lambda;y,t)\dd y,$
	\begin{equation*}
			g_{1}(\lambda;x,t)=\lambda^{3}|2\lambda^{2}|^{\epsilon}\partial_{x}\left(\widehat{\phi}_{12}^{+}(\lambda;x,t)\right)^2,\ \ \ \ g_{2}(\lambda;y,t)=\left(\widehat{\phi}_{11}^{-}(\lambda;y,t)\right)^{2}\left(q^{[1]}(y,t)\right)^{*}-\left(\widehat{\phi}_{21}^{-}(\lambda;y,t)\right)^{2}q^{[1]}(y,t).
	\end{equation*}
	Combining with the residue theorem, we can decompose \eqref{4-step1} into continuous and discrete parts. The function $g(\lambda;x,t)$ is analytic in $D^{+}$, so the residues in \eqref{4-step1} come from the zeros of $s_{11}(\lambda;t)$, which occur at $\lambda_{1}=\xi+\ii\eta$ and $\lambda_{2}=-\xi-\ii\eta$. Then we have 
	\begin{equation}\label{4-step1-cd}
		\begin{split}
			\int_{\Gamma_{+}}\dfrac{g(\lambda;x,t)}{s_{11}^{2}(\lambda;t)}\dd\lambda=&-\int_{\Sigma_{+}}\dfrac{g(\lambda;x,t)}{s_{11}^{2}(\lambda;t)}\dd\lambda+2\pi\ii\sum_{j=1}^{2}\underset{\lambda=\lambda_{j}}{\rm{Res}}\dfrac{g(\lambda;x,t)}{s_{11}^{2}(\lambda;t)}\\
			=&-\int_{\Sigma_{+}}\dfrac{g(\lambda;x,t)}{s_{11}^{2}(\lambda;t)}\dd\lambda+2\pi\ii\sum_{j=1}^{2}\left(\dfrac{g^{'}(\lambda_{j};x,t)}{s_{11}^{'2}(\lambda_{j};t)}-\dfrac{g(\lambda_{j};x,t)s_{11}^{''}(\lambda_{j})}{s_{11}^{'3}(\lambda_{j};t)}\right),
		\end{split}
	\end{equation}
	here the superscript $^{'}$ also denotes the partial derivative with respect to $\lambda$. 
	Through some analyses and calculations, we find that the above equation only needs to calculate the first part of the discrete spectrum, which comes from $\int_{-\infty}^{+\infty}g_{2}(\lambda;y,t)\dd y=0$ for all $\lambda$. In fact, based on \eqref{4-q1} and \eqref{4-q1-Phi},
	\begin{equation*}
		\begin{split}
		g_{2}(\lambda;y,t)=&\dfrac{2\ii c_{1}(\xi-\ii\eta)^2\exp(-\theta_{R}+3\ii\theta_{I}-\omega_{1}+2\delta)}{\big((\xi+\ii\eta)(\lambda+\xi-\ii\eta)(\lambda-\xi+\ii\eta)\big)^2}\Big(\lambda^{4}{\rm sech}(\omega_{2}+\theta)
		-2\lambda^{2}(\xi^2+\eta^2){\rm sech}(\omega_{2}+\theta^{*})\\
		&+(\xi^2+\eta^2)^2\cosh(\omega_{2}+\theta){\rm sech}^2(\omega_{2}+\theta^{*})+c_{1}^2\lambda^2\ee^{-2\theta_{R}}{\rm sech}(\omega_{2}+\theta^{*}){\rm sech}^2(\omega_{2}+\theta)\Big).
		\end{split}
	\end{equation*}
	We introduce the transformation $z=\omega_{2}+\theta_{R}$, then the above equation can be rewritten as 
	\begin{equation*}
		\begin{split}
		g_{2}(\lambda;z,t)=&\dfrac{2\ii c_{1}(\xi-\ii\eta)^2\exp(-\theta_{R}+3\ii\theta_{I}+\widehat{\delta})}{\big((\xi+\ii\eta)(\lambda+\xi-\ii\eta)(\lambda-\xi+\ii\eta)\big)^2}\Big(\lambda^{4}{\rm sech}(z+\ii\theta_{I})
		-2\lambda^{2}(\xi^2+\eta^2){\rm sech}(z-\ii\theta_{I})\\
		&+(\xi^2+\eta^2)^2\cosh(z+\ii\theta_{I}){\rm sech}^2(z-\ii\theta_{I})+c_{1}^2\lambda^2\ee^{-2\theta_{R}}{\rm sech}(z-\ii\theta_{I}){\rm sech}^2(z+\ii\theta_{I})\Big),\\
		\widehat{\delta}=&\ \ii 2^{2+\epsilon}\left(2\lambda^2(\xi^{2}-\eta^{2})(\xi^{2}+\eta^{2})^{\epsilon}-(\xi^2+\eta^2)^{2+\epsilon}-|\lambda^2|^{\epsilon}\lambda^4\right)t+\dfrac{\ii(\xi^2-\eta^2-\lambda^2)\theta_{R}}{2\xi\eta}.
	\end{split}
	\end{equation*}
	Therefore,
	\begin{equation*}
		\begin{split}
		\int_{-\infty}^{+\infty}g_{2}(\lambda;y,t)\dd y=&\dfrac{-\ii c_{1}(\xi-\ii\eta)^2\exp(-\theta_{R}+3\ii\theta_{I}+\widehat{\delta})}{2\xi\eta\big((\xi+\ii\eta)(\lambda+\xi-\ii\eta)(\lambda-\xi+\ii\eta)\big)^2}\int_{-\infty}^{+\infty}\widehat{g}_{2}(\lambda;z,t)\dd z,\\
		\widehat{g}_{2}(\lambda;z,t)=&\exp\left(\dfrac{\ii(\lambda^2-\xi^2+\eta^2)z}{2\xi\eta}\right)\Big(\lambda^{4}{\rm sech}(z+\ii\theta_{I})
		-2\lambda^{2}(\xi^2+\eta^2){\rm sech}(z-\ii\theta_{I})\\
		&+(\xi^2+\eta^2)^2\cosh(z+\ii\theta_{I}){\rm sech}^2(z-\ii\theta_{I})+c_{1}^2\lambda^2\ee^{-2\theta_{R}}{\rm sech}(z-\ii\theta_{I}){\rm sech}^2(z+\ii\theta_{I})\Big).
		\end{split}
	\end{equation*}
	We consider $\widehat{g}_{2}(\lambda;z,t)$ on the matrix contour in Fig.\ref{fig:z-plane}, there is
	\begin{figure}
		\centering
		\includegraphics[width=0.45\linewidth]{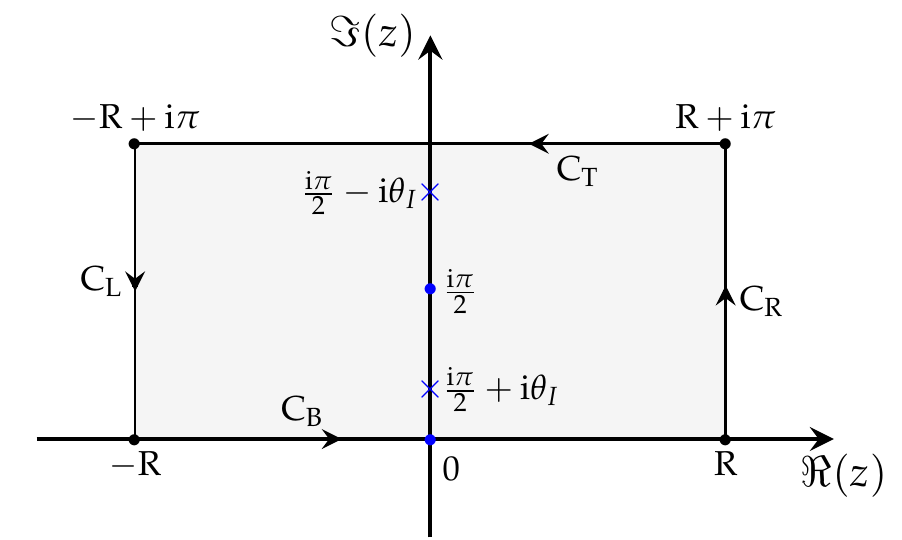}
		\caption[z-plane]{z-plane}
		\label{fig:z-plane}
	\end{figure}
	\begin{equation*}
		\int_{C_{B}}\widehat{g}_{2}(z)\dd z+\int_{C_{R}}\widehat{g}_{2}(z)\dd z+\int_{C_{T}}\widehat{g}_{2}(z)\dd z+\int_{C_{L}}\widehat{g}_{2}(z)\dd z
		=2\pi\ii\left(\underset{z=\frac{\ii\pi}{2}+\ii\theta_{I}}{\rm{Res}}\widehat{g}_{2}(z)+\underset{z=\frac{\ii\pi}{2}-\ii\theta_{I}}{\rm{Res}}\widehat{g}_{2}(z)\right).
	\end{equation*}
	By observation, we can find the integral over $C_{L}$ and $C_{R}$ vanish as $R\to\infty$, and the integral over $C_{T}$ can be written in terms of the integral over $C_{B}$ as
	\begin{equation*}
		\int_{C_{T}}\widehat{g}_{2}(z)\dd z=\exp\left(\dfrac{-(\lambda^2-\xi^2+\eta^2)\pi}{2\xi\eta}\right)\int_{C_{B}}\widehat{g}_{2}(z)\dd z,
	\end{equation*}
	then
	\begin{equation*}
		\left(1+\exp\left(\dfrac{-(\lambda^2-\xi^2+\eta^2)\pi}{2\xi\eta}\right)\right)\int_{C_{B}}\widehat{g}_{2}(z)\dd z=2\pi\ii\left(\underset{z=\frac{\ii\pi}{2}+\ii\theta_{I}}{\rm{Res}}\widehat{g}_{2}(z)+\underset{z=\frac{\ii\pi}{2}-\ii\theta_{I}}{\rm{Res}}\widehat{g}_{2}(z)\right).
	\end{equation*}
	While the residues of $\widehat{g}_{2}(z)$ vanish at $z=\frac{\ii\pi}{2}\pm\ii\theta_{I}$. Thus, $\int_{C_{B}}\widehat{g}_{2}(z)\dd z=0$, which leads to $\int_{-\infty}^{+\infty}g_{2}(\lambda;y,t)\dd y=0$. Combining with the expression for $s_{11}(\lambda;t)$, we can rewrite \eqref{4-step1-cd} as 
	\begin{equation*}
		\int_{\Gamma_{+}}\dfrac{g(\lambda;x,t)}{s_{11}^{2}(\lambda;t)}\dd\lambda=2\pi\ii\sum_{j=1}^{2}\dfrac{g'(\lambda_{j};x,t)}{s_{11}^{'2}(\lambda_{j};t)}=\dfrac{-16\ii\pi\xi^2\eta^2(\xi+\ii\eta)^2}{(\xi-\ii\eta)^4}g'(\lambda_{1};x,t).
	\end{equation*}
	So we only need to calculate $g'(\lambda_{1};x,t)$, and in fact, $g'(\lambda_{1};x,t)=g_{1}(\lambda_{1};x,t)\int_{-\infty}^{\infty}g^{'}_{2}(\lambda_{1};x,t)\dd y$. Through  calculations, we can get
	\begin{equation*}
		g'(\lambda_{1};x,t)=\dfrac{\ii(\xi^2+\eta^2)^2(\xi-\ii\eta)^2\ee^{\omega_{1}-3\ii\theta_{I}}}{4\xi^3\eta^3\big((\xi-\ii\eta)\ee^{\omega_{2}+\theta_{R}}+(\xi+\ii\eta)\ee^{-\omega_{2}-\theta_{R}}\big)}
		\Big(c_{1}^2(\xi^2+\eta^2)^2\ee^{2\omega_{2}}+4\xi^2\eta^2(\xi^2-6\ii\xi\eta-\eta^2)\Big).
	\end{equation*}
	Therefore,
	\begin{equation}\label{4-int-gamma+}
		\int_{\Gamma_{+}}\dfrac{g(\lambda;x,t)}{s_{11}^{2}(\lambda;t)}\dd\lambda{=}2^{\epsilon}(\xi^2{+}\eta^2)^{\epsilon}{\rm sech}^3(\omega_{2}{+}\theta)\ee^{\omega_{1}}\left(\dfrac{\pi c_{1}^2(\xi{+}\ii\eta)^6}{2\xi\eta(\xi{-}\ii\eta)}\ee^{2\omega_{2}}{+}\dfrac{2\pi\xi\eta(\xi{+}\ii\eta)^4(\xi^2{-}6\ii\xi\eta{-}\eta^2)}{(\xi{-}\ii\eta)^3}\right).
	\end{equation}
	
	Using the same method as in calculating the integral on $\Gamma_{+}$, we can also calculate  the integral part of the equation \eqref{4-fDNLS-new} on $\Gamma_{-}$, and here we give the result directly,
	\begin{equation}\label{4-int-gamma-}
		\begin{split}
		&\int_{\Gamma_{-}}\dfrac{\lambda^{3}|2\lambda^{2}|^{\epsilon}}{ s_{22}^{2}(\lambda;t)}\partial_{x}\left(\widehat{\phi}_{11}^{+}(\lambda;x,t)\right)^2
		\int_{-\infty}^{+\infty}\left(\left(\widehat{\phi}_{12}^{-}(\lambda;y,t)\right)^{2}\left(q^{[1]}(y,t)\right)^{*}-\left(\widehat{\phi}_{22}^{-}(\lambda;y,t)\right)^{2}q^{[1]}(y,t)\right)\dd y\dd\lambda\\
		&=-2^{\epsilon}(\xi^2+\eta^2)^{\epsilon}{\rm sech}^3(\omega_{2}+\theta)\ee^{\omega_{1}}
		\times\left(\dfrac{8\pi\xi^3\eta^3(\xi-\ii\eta)}{c_{1}^2}\ee^{-2\omega_{2}}+2\pi\xi\eta(\xi-\ii\eta)(\xi^2+6\ii\xi\eta-\eta^2)\right).
		\end{split}
	\end{equation}
	
	In addition, according to the equation \eqref{4-q1}, we can directly obtain the derivative of $q^{[1]}(x,t)$ with respect to $t$,
	\begin{equation}\label{4-q1-t}
		\begin{split}
		q^{[1]}_{t}(x,t)=-2^{\epsilon}(\xi^2+\eta^2)^{\epsilon}{\rm sech}^3(\omega_{2}+\theta)\ee^{\omega_{1}}\bigg(\dfrac{c_{1}^2(\xi+\ii\eta)^6}{\xi\eta(\xi-\ii\eta)}\ee^{2\omega_{2}}&+\dfrac{16\xi^3\eta^3(\xi-\ii\eta)}{c_{1}^2}\ee^{-2\omega_{2}}\\&+\dfrac{8\xi\eta(\xi^2-\eta^2)(\xi^4+18\xi^2\eta^2+\eta^4)}{(\xi-\ii\eta)^3}\bigg).
		\end{split}
	\end{equation}
	By substituting \eqref{4-int-gamma+},  \eqref{4-int-gamma-}, and \eqref{4-q1-t} into the equation \eqref{4-fDNLS-new}, it can be found that the left and right sides are equal. This proves that $q^{[1]}(x,t)$ (i.e.\eqref{4-q1}) satisfies the equation \eqref{4-fDNLS-new}.
\end{proof}

Next, we will analyze the properties of the solution $q^{[1]}(x,t)$. In terms of \eqref{4-q1}, we can easily obtain the expression for the modular square of the solution $q^{[1]}(x,t)$,
\begin{equation*}
	|q^{[1]}(x,t)|^2=\dfrac{32\xi^2\eta^2}{(\xi^2+\eta^2)\cosh(2\omega_{2}+2\theta_{R})+\xi^2-\eta^2}.
\end{equation*}
Furthermore, we can derive the maximum value of $|q^{[1]}(x,t)|$ is $4\eta$. By selecting appropriate parameters, we give the relevant figures corresponding to $|q^{[1]}(x,t)|$ in Fig.\ref{fig:direction}. 
From the left column in Fig.\ref{fig:direction}, we can find that the
soliton is a left-going traveling-wave soliton. And with the
increase of $\epsilon$, the velocity of the traveling wave will be faster, which can be observed from the right column in Fig.\ref{fig:direction}.
\begin{figure}[H]
	\centering
	\includegraphics[width=0.6\linewidth]{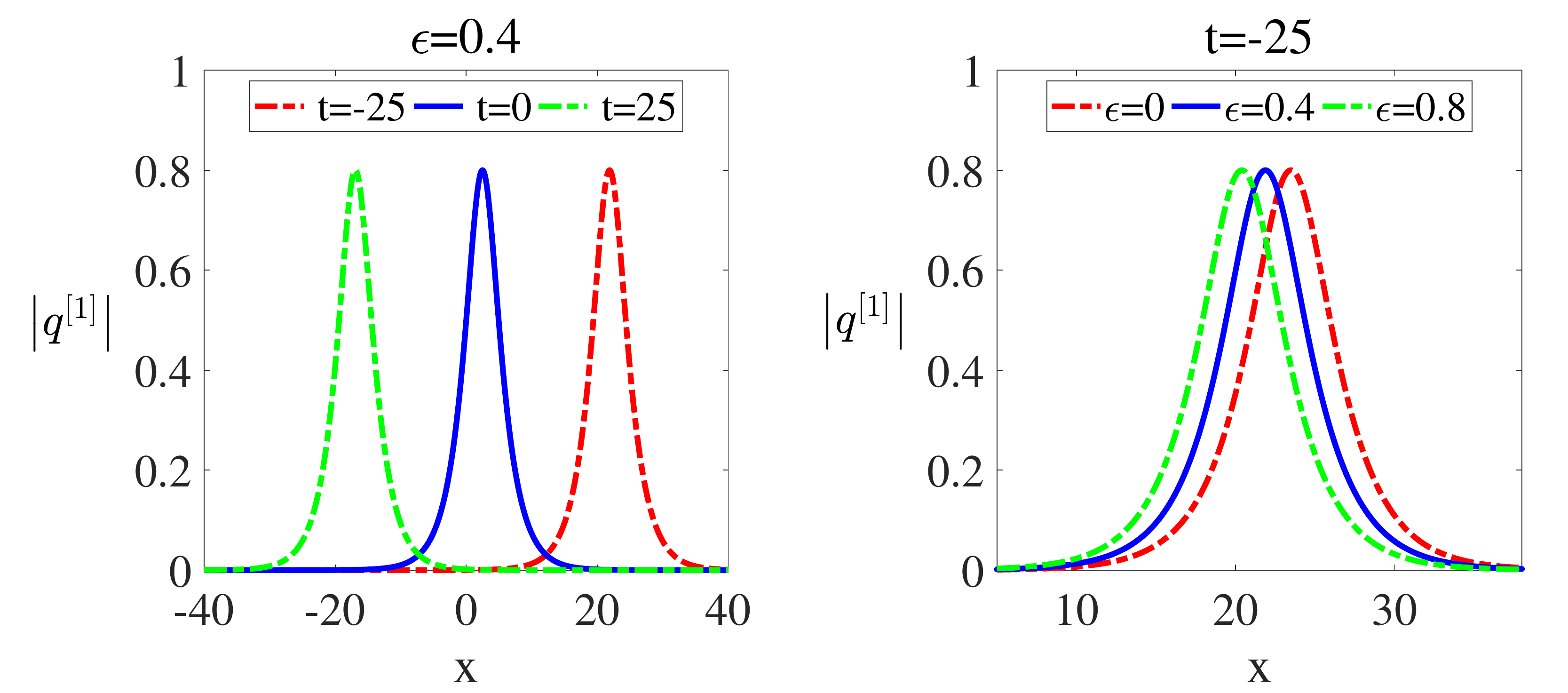}
	\caption[wave direction]{The direction of wave propagation. Choosing the parameters: $\xi=0.5,\eta=0.2,c_{1}=1.$}
	\label{fig:direction}
\end{figure}

In addition, if we take the limit of the soliton solution, we can also obtain the rational solution \cite{xu2011darboux,guo2013high}. By considering $\xi\to0$ in the fractional one-soliton solution \eqref{4-q1}, we find that the fractional rational solution $q^{[1]}_{r}(x,t)$ will occur when $c_{1}=\pm2\xi\to0$. Then we have
\begin{equation}\label{4-q1-rational}
	q^{[1]}_{r}(x,t)=\mp4\ii\eta\exp\left(2\ii\eta^2\left(x-(2\eta^2)^{1+\epsilon}t\right)\right)\dfrac{4\eta^2\left(x-2(2\eta^2)^{1+\epsilon}t\right)+\ii}{\left(4\eta^2\left(x-2(2\eta^2)^{1+\epsilon}t\right)-\ii\right)^2},
\end{equation}
with the arbitrary real constants $\epsilon$ and $\eta$. Obviously, the fractional rational solution \eqref{4-q1-rational} is a linear soliton with the center along the line $x{=}2(2\eta^2)^{1+\epsilon}t$. The amplitude of $|q^{[1]}_{r}(x,t)|$ is $4\eta$ as well as $|q^{[1]}(x,t)|$. Here we choose the same parameter $\eta{=}0.2$ as in Fig.\ref{fig:direction}. In Fig.\ref{fig:direction_rational}, we can observe that unlike the fractional one-soliton solution $q^{[1]}(x,t)$, this linear soliton is a right-going traveling-wave soliton. At the same time, the velocity of the traveling wave will also be faster as the increase of $\epsilon$. Note that when $|x|{\to}\infty$, $|q^{[1]}_{r}(x,t)|$ will tend to zero for arbitrary fixed $t$.
\begin{figure}[H]
	\centering
	\includegraphics[width=0.6\linewidth]{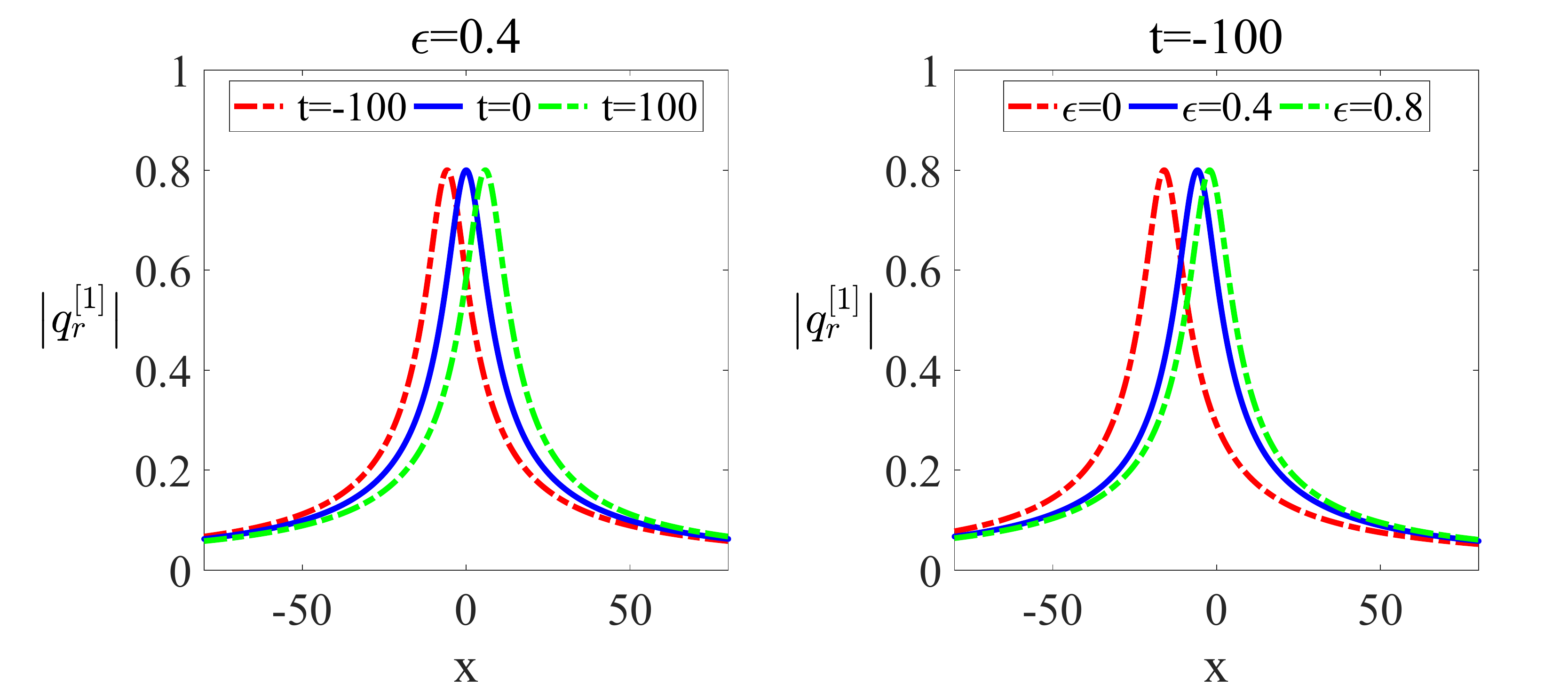}
	\caption[wave direction]{The direction of wave propagation. Choosing the parameter: $\eta=0.2.$}
	\label{fig:direction_rational}
\end{figure}

\section{Conclusion}
In conclusion, based on the fractional integrable equation of the AKNS system proposed by Ablowitz, Been, and Carr,
we extend it to the KN system in the sense of the Riesz fractional derivative. We study the fractional DNLS equation in detail and obtain the fractional $N$-soliton solution according to the IST method. In addition, we give the explicit form of the fractional one-soliton solution and provide rigorous proof by combined with the Darboux transformation in \cite{guo2013high}. And from the right panel in Fig.\ref{fig:direction}, we find that the fractional solitons propagate without dissipating or spreading out. Moreover, we discuss the limitations of the fractional one-soliton solution and get the fractional rational solution of the fDNLS equation. These phenomena will significantly enrich the dynamic properties of integrable systems and help predict the superdispersive transport of nonlinear waves in fractional nonlinear media.

Note that we only proved the fractional one-soliton solution. The fractional two-soliton solution and even the fractional $N$-soliton solution can also be verified via a similar method, but the computations will become more complicated. Therefore, finding a more convenient method to prove the solution is necessary. In addition, the nonlocal equation is also a hot research topic in the integrable system \cite{ablowitz-2017,an-2021,ling-2021}. It is worth considering whether the nonlocal and fractional integrable equations can be studied together.

\bibliographystyle{siam}
\bibliography{Ref-fDNLS}

\end{document}